\newif\if@restonecol
\newcommand{\eat}[1]{}
\newcommand{\reminder}[1]{ (((\mbox{$\Longleftarrow \star$}{\textbf{#1}} )))}
\newcommand{\eatreminders}[0]{\renewcommand{\reminder}[1]{}}
\newcommand{\system}[1]{{\ensuremath {\mathsf{#1}}}\xspace}
\newcommand{\QED}{\mbox{\rule[0pt]{1.2ex}{1.2ex}}}
\theoremstyle{plain}
\newtheorem{definition}{Definition}
\newtheorem{theorem}{Theorem}
\newtheorem{lemma}{Lemma}
\newtheorem{example}{Example}
\newtheorem{property}{Property}
\newtheorem*{proof}{Proof}
\newcommand{\topk}{top-\emph{k}}
\newcommand{\entity}[1]{\textsf{\scriptsize #1}}
\newcommand{\etuple}[2]{$\langle \entity{#1},\entity{#2}\rangle$}
\newcommand{\edge}[1]{\textsf{\emph{\scriptsize #1}}}
\newcommand{\edgeends}[2]{(\entity{#1}, \entity{#2})}
\newcommand{\eg}{e.g.,\xspace}
\newcommand{\ie}{i.e.,\xspace}
\newcommand{\spara}[1]{\smallskip\noindent{\bf #1}}
\newcommand{\uf}{\mathcal{UF}}
\newcommand{\lf}{\mathcal{LF}}
\newcommand{\ub}{\mathcal{UB}}
\newcommand{\nb}{\mathcal{NB}}
\newcommand{\lattice}{\mathcal{L}}
\newcommand{\ubs}{U}
\newcommand{\supgraph}{\succeq}
\title{Querying Knowledge Graphs by Example Entity Tuples}
\author{
{Nandish Jayaram$^\dag$\hspace*{0.5em}
Arijit Khan$^\S$\hspace*{0.5em}
Chengkai Li$^\dag$\hspace*{0.5em}
Xifeng Yan$^\S$\hspace*{0.5em}
Ramez Elmasri$^\dag$
}
\vspace{1.6mm}\\
\fontsize{10}{10}\selectfont\itshape
$^\dag$University of Texas at Arlington, $^\S$University of California, Santa Barbara
}
\begin{document}
\maketitle
\thispagestyle{plain}
\pagestyle{plain}
\begin{abstract}
We witness an unprecedented proliferation of knowledge graphs that record millions of entities and their relationships.  While knowledge graphs are structure-flexible and content-rich, they are difficult to use.  The challenge lies in the gap between their overwhelming complexity and the limited database knowledge of non-professional users.  If writing structured queries over ``simple'' tables is difficult, complex graphs are only harder to query.  As an initial step toward improving the usability of knowledge graphs, we propose to query such data by example entity tuples, without requiring users to form complex graph queries.  Our system, \system{GQBE} (Graph Query By Example), automatically derives a weighted hidden maximal query graph based on input query tuples, to capture a user's query intent.  It efficiently finds and ranks the top approximate answer tuples.  For fast query processing, \system{GQBE} only partially evaluates query graphs.  We conducted experiments and user studies on the large Freebase and DBpedia datasets and observed appealing accuracy and efficiency. Our system provides a complementary approach to the existing keyword-based methods, facilitating user-friendly graph querying. To the best of our knowledge, there was no such proposal in the past in the context of graphs.
\end{abstract}

\section{Introduction}\label{sec:intro}

\reminder{mention Facebook entity graph, semantic graph, SPARQL, we ignore numeric values, some QBE-related work on RDF...}

\reminder{related work
SWiPE: Searching Wikipedia by Example
RDF-QBE: a Semantic Web Building Block
Semantic Query-by-Example for RDF data - Sogang University
}

There is an unprecedented proliferation of \emph{knowledge graphs} that record millions of entities (e.g., persons, products, organizations) and their relationships. Fig.\ref{fig:example-graph} is an excerpt of a knowledge graph, in which the edge labeled \edge{founded} between nodes \entity{Jerry Yang} and \entity{Yahoo!} captures the fact that the person is a founder of the company.  Examples of real-world knowledge graphs include DBpedia~\cite{AuerBK+07}, YAGO~\cite{SuchanekKW07}, Freebase~\cite{Bollacker+08freebase} and Probase~\cite{probase}.
Users and developers are tapping into knowledge graphs for numerous applications, including search, recommendation and business intelligence.


Both users and application developers are often overwhelmed by the daunting task of understanding and using knowledge graphs.  This largely has to do with the sheer size and complexity of such data.  As of March 2012, the Linking Open Data community had interlinked over 52 billion RDF triples spanning over several hundred datasets.  More specifically, the challenges lie in the gap between complex data and non-expert users.
Knowledge graphs are often stored in relational databases, graph databases and triplestores (cf.~\cite{KhanWY12} for a tutorial).
\begin{figure}[htb]
\centering
  \includegraphics[width = 0.99\linewidth, keepaspectratio = true, scale=0.4]{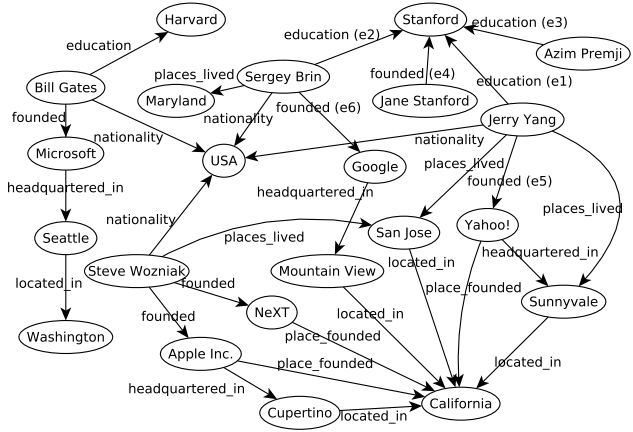}\vspace{-4mm}
\caption{An Excerpt of a Knowledge Graph}
\label{fig:example-graph}
\end{figure}

\begin{figure}[tb!]
\centering
  \includegraphics[width = 0.9\linewidth, keepaspectratio = true, scale=0.4]{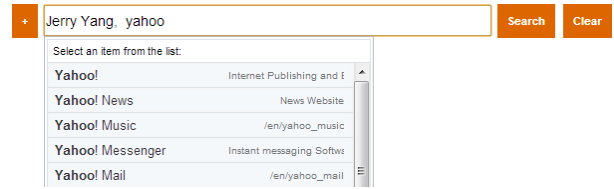}\vspace{-3mm}
\caption{Query Interface of \system{GQBE}}
\label{fig:searchScreen}\vspace{1mm}
\end{figure}
In retrieving data from these databases, the norm is often to use structured query languages such as SQL, SPARQL, and those alike.  However, writing structured queries requires extensive experiences in query language and data model and good understanding of particular datasets~\cite{usability}.  
Graph data is not ``easier'' than relational data in either query language or data model.  The fact it is schema-less makes it even more intangible to understand and query. \emph{If querying ``simple'' tables is difficult, aren't complex graphs harder to query?} \reminder{CL: Explain why it is difficult to write a structured query on data graph.}


\begin{figure*}[htb]
\centering
  \includegraphics[keepaspectratio = true, scale=0.47]{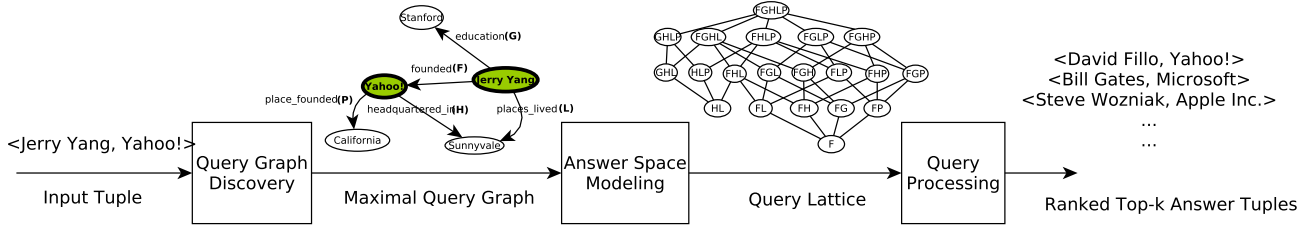}
\caption{The Architecture and Components of \system{GQBE}}
\label{fig:architecture}
\end{figure*}

Motivated by the aforementioned usability challenge, we build \system{GQBE}~\footnote{A description of \system{GQBE}'s user interface and demonstration scenarios can be found in~\cite{gqbedemo}.  An accompanying demonstration video is at {\small \url{http://www.youtube.com/watch?v=4QfcV-OrGmQ}}.} (Graph Query by Example), a system that queries knowledge graphs by example entity tuples instead of graph queries.  Given a data graph and a query tuple consisting of entities, \system{GQBE} finds similar answer tuples.  Consider the data graph in Fig.\ref{fig:example-graph} and an
scenario where a Silicon Valley business analyst is interested in finding entrepreneurs who have founded technology companies head-quartered in California.
Suppose she knows an example 2-entity query tuple such as \etuple{Jerry Yang}{Yahoo!} that satisfies her query intent.
As the query interface in Fig.~\ref{fig:searchScreen} shows, entering such an example tuple to \system{GQBE} is simple, especially with the help of user interface
tools such as auto-completion in identifying the exact entities in the data graph.
The answer tuples can be \etuple{Steve Wozniak}{Apple Inc.} and \etuple{Sergey Brin}{Google}, which are founder-company pairs.
If the query tuple consists of 3 or more entities (e.g., $\langle \entity{Jerry Yang},\entity{Yahoo!},$ $\entity{Sunnyvale}\rangle$), the answers will be similar tuples of the same cardinality (e.g., $\langle \entity{Steve Wozniak},\entity{Apple Inc.},\entity{Cupertino}\rangle$).

Our work is the first to query knowledge graphs by example entity tuples. The paradigm of \emph{query-by-example} (QBE) has a long history in relational databases~\cite{qbe}.   The idea is to express queries by filling example tables with constants and shared variables in multiple tables, which correspond to selection and join conditions, respectively.  Its simplicity and improved user productivity make QBE an influential database query language.  By proposing to query knowledge graphs by example tuples, our premise is that the QBE paradigm will enjoy similar advantages on graph data. The technical challenges and approaches are vastly different, due to the fundamentally different data models.

Substantial progress has been made on query mechanisms that help users construct query graphs or even do not require explicit query graphs.  Such mechanisms include keyword search (e.g., \cite{KA11}), keyword-based query formulation~\cite{PoundIW10,YCHH12}, natural language questions~\cite{Yahya+12}, interactive and form-based query formulation~\cite{Demidova+12,Jarrar+12}, and visual interface for query graph construction~\cite{GRAPHITE, GBLENDER}.  Little has been done on comparison across these graph query mechanisms.  While a usability comparison of these mechanisms and \system{GQBE} is beyond the scope of this paper, we note that they all have 
pros and cons and thus complement each other. 

Particularly, QBE and keyword-based methods are adequate for different usage scenarios.  
Using keyword-based methods, a user has to articulate query keywords, e.g., ``technology companies head-quartered in California and their founders'' for the aforementioned analyst.  
Not only a user may find it challenging to clearly articulate a query, but also a query system might not return accurate answers, since it is non-trivial to precisely separate these keywords and correctly match them with entities, entity types and relationships.
This has been verified through our own experience on a keyword-based system adapted from SPARK~\cite{spark}. 
In contrast, a \system{GQBE} user only needs to know the names of some entities in example tuples, without being required to specify how exactly the entities are related. 
On the other hand, keyword-based querying is more adequate when a user does not know a few sample answers with respect to her query.

In the literature on graph query, the input to a query system in most cases is a
structured query, which is often graphically presented as a query graph or
pattern. Such is not what we refer to as query-by-example, because the query graphs and patterns are formed by using structured query
languages or the aforementioned query mechanisms. For instance, \system{PathSim}~\cite{Sun+11} finds the \topk\ similar entities that are connected to a query entity, based on a user-defined meta-path semantics in a heterogeneous network. In~\cite{YSZH12}, given a query graph as input, the system finds structurally isomorphic answer graphs with semantically similar entity nodes.
In contrast, \system{GQBE} only requires a user to provide an entity tuple, without knowing the underlying schema.


There are several challenges in building \system{GQBE}. Below we provide a brief overview of our approach in tackling these challenges. The ensuing discussion refers to the system architecture and components of \system{GQBE}, as shown in Fig.~\ref{fig:architecture}.

(1) With regard to \emph{query semantics}, since the input to \system{GQBE} is a query tuple instead of an explicit query graph, the system must derive a hidden query graph based on the query tuple, to capture user's query intent.  The \emph{query graph discovery} component (Sec.\ref{sec:qgraph}) of \system{GQBE} fulfills this requirement and the derived graph is termed a \emph{maximal query graph} (MQG).
The edges in MQG, weighted by several frequency-based and distance-based heuristics, represent important ``features'' of the query tuple to be matched in answer tuples.  More concretely, they capture how entities in the query tuple (i.e., nodes in a data graph) and their neighboring entities are related to each other.  Answer graphs matching the MQG are projected to answer tuples, which consist of answer entities corresponding to the query tuple entities.  
\system{GQBE} further supports multiple query tuples as input which collectively better capture the user intent.

(2) With regard to \emph{answer space modeling} (Sec.\ref{sec:modeling}), there can be a large space of approximate answer graphs (tuples), since it is unlikely to find answer graphs exactly matching the MQG.
\system{GQBE} models the space of answer tuples by a \emph{query lattice} formed by the subsumption relation between all possible query graphs.  Each query graph is a subgraph of the MQG and contains all query entities.  Its answer graphs are also subgraphs of the data graph and are isomorphic to the query graph.  Given an answer graph, its entities corresponding to the query tuple entities form an answer tuple. 
Thus the answer tuples are essentially approximate answers to the MQG.  For ranking answer tuples, their scores are calculated based on the edge weights in their query graphs and the match between nodes in the query and answer graphs.

(3) The query lattice can be large.  To obtain \topk\ ranked answer tuples, the brute-force approach of evaluating all query graphs in the lattice can be prohibitively expensive.
For \emph{efficient query processing} (Sec.\ref{sec:processing}), \system{GQBE} employs a top-$k$ lattice exploration algorithm that only partially evaluates the lattice nodes in the order of their corresponding query graphs' upper-bound scores.

We summarize the contributions of this paper as follows:
\begin{list}{$\bullet$}
{ \setlength{\leftmargin}{0.5em} \setlength{\itemsep}{0pt}}
\item For better usability of knowledge graph querying systems,  we propose a novel approach of querying by example entity tuples, which saves users the burden of forming explicit query graphs. To the best of our knowledge, there was no such proposal in the past. 
\item The query graph discovery component of \system{GQBE} derives a hidden maximal query graph (MQG) based on input query tuples, to capture users' query intent. \system{GQBE} models the space of query graphs (and thus answer tuples) by a query lattice based on the MQG.
\item \system{GQBE}'s efficient query processing algorithm only partially evaluates the query lattice to obtain the \topk\ answer tuples ranked by how well they approximately match the MQG.
\item We conducted extensive experiments and user study on the large Freebase and DBpedia datasets to evaluate \system{GQBE}'s accuracy and efficiency (Sec.\ref{sec:exp}). The comparison with a state-of-the-art graph querying framework \system{NESS}\cite{ness} (using MQG as input) shows that \system{GQBE} is twice as accurate as \system{NESS} and outperforms \system{NESS} on efficiency in most of the queries.
\end{list}


\section{Problem Formulation}
\label{sec:prelim}

\system{GQBE} runs queries on knowledge data graphs.
A \emph{\textbf{data graph}} is a directed multi-graph $G$ with node set $V(G)$ and edge set $E(G)$.
Each node $v$$\in$$V(G)$ represents an entity and has a unique identifier $id(v)$.~\footnote{Without loss
of generality, we use an entity's name as its identifier in presenting
examples, assuming entity names are unique.}
Each edge $e$=$(v_i,v_j)$$\in$$E(G)$ denotes a directed relationship from
entity $v_i$ to entity $v_j$. It has a label, denoted as $label(e)$.
Multiple edges can have the same label.
The user input and output of \system{GQBE} are both entity tuples,
called \emph{\textbf{query tuples}} and \emph{\textbf{answer tuples}}, respectively.
A tuple $t$=$\langle v_1, \ldots, v_n \rangle$ is an ordered list of
entities (i.e., nodes) in $G$.  The constituting entities of query (answer)
tuples are called \emph{query (answer) entities}.  Given a data graph $G$
and a query tuple $t$, our goal is to find the \topk\ answer tuples $t'$
with the highest similarity scores $\textsf{score}_t(t')$.

\begin{figure}[t]
\centering
  \includegraphics[width = 0.88\linewidth, keepaspectratio = true, scale=0.40]{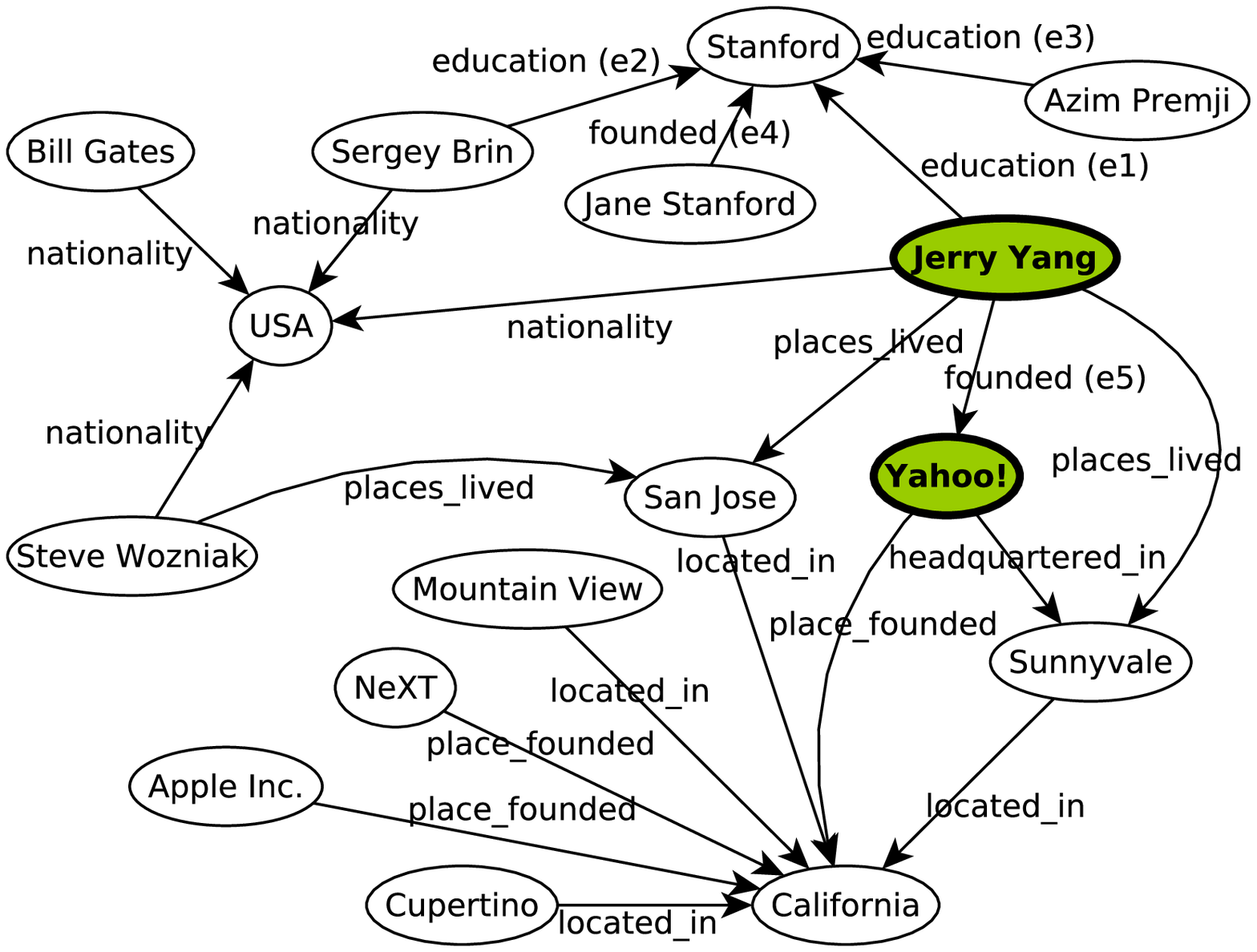}
\caption{Neighborhood Graph for \etuple{Jerry Yang}{Yahoo!}}
\label{fig:neighbor-graph}
\end{figure}

\begin{figure*}[htb]
\begin{minipage}[b]{0.33\linewidth}
\centering
  \includegraphics[width = 1.0\linewidth, keepaspectratio = true, scale=0.6]{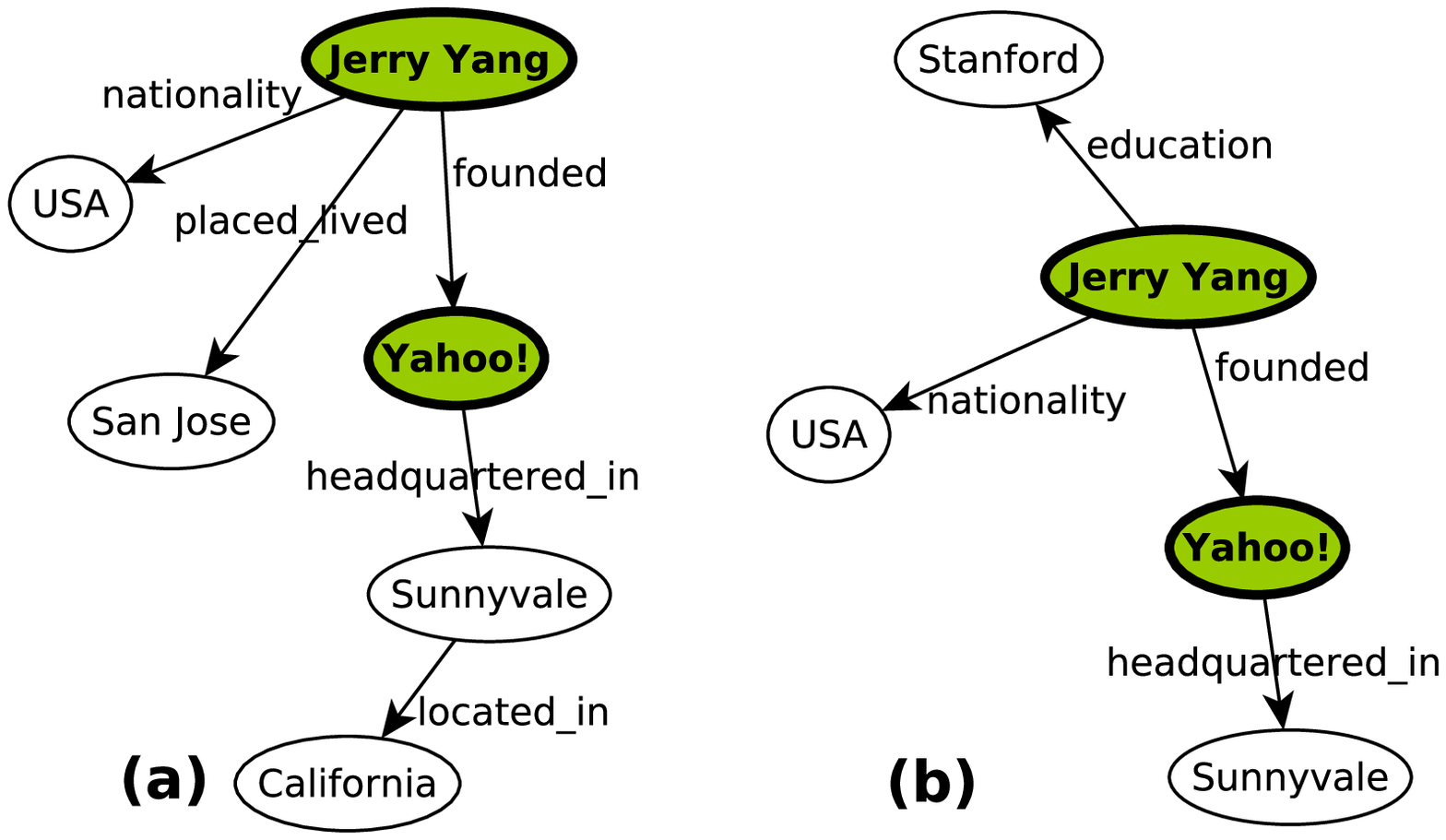}
\caption{Two Query Graphs in Fig.\ref{fig:neighbor-graph}}
\label{fig:query-graph}
\end{minipage}
\hspace{0.1cm}
\begin{minipage}[b]{0.33\linewidth}
\centering
  \includegraphics[width = 1.0\linewidth, keepaspectratio = true, scale=0.6]{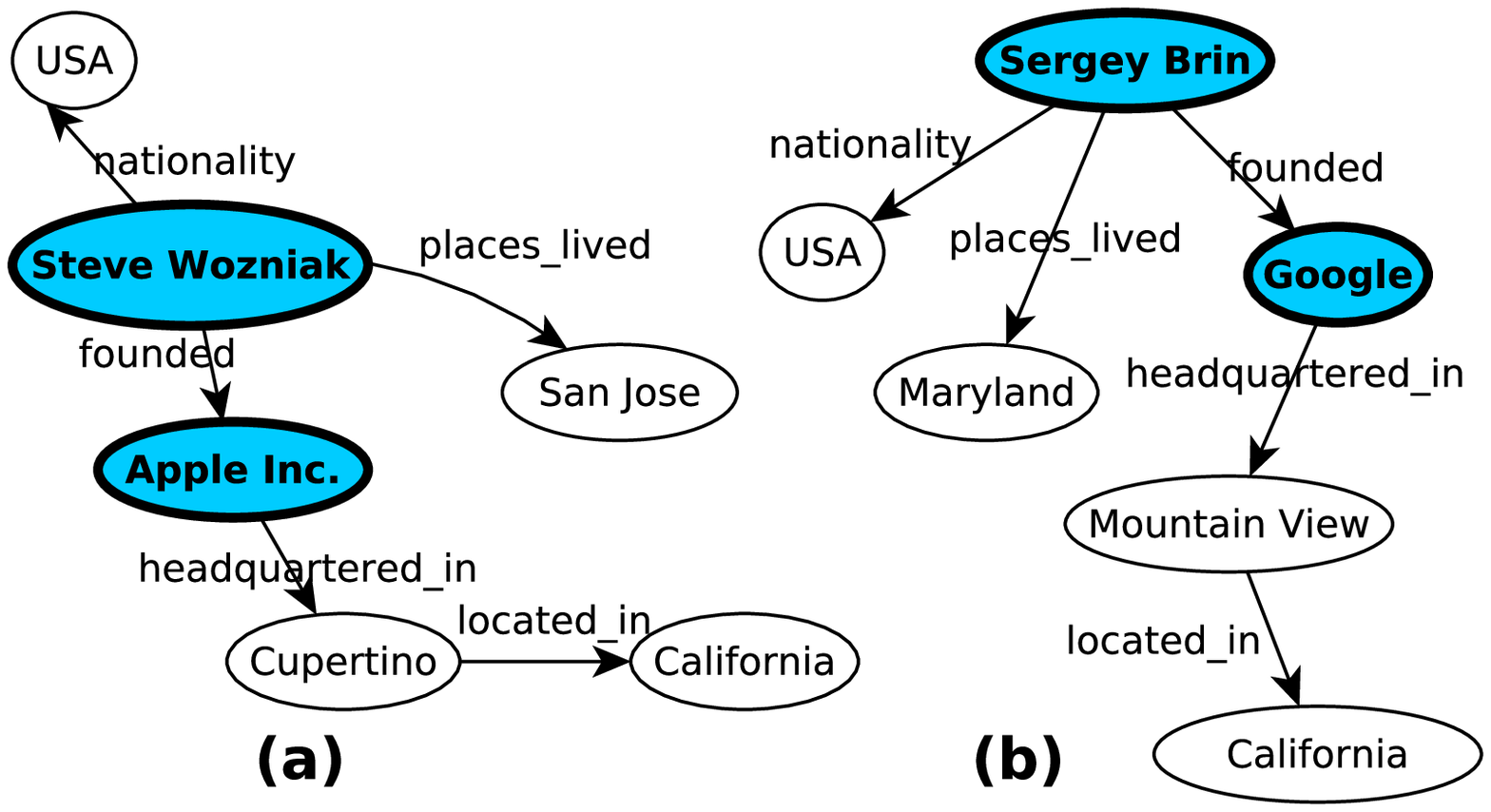}
\caption{Two Answer Graphs for Fig.\ref{fig:query-graph}(a)}
\label{fig:answer-graph}
\end{minipage}
\hspace{0.1cm}
\begin{minipage}[b]{0.33\linewidth}
\centering
  \includegraphics[width = 1.0\linewidth, keepaspectratio = true, scale=0.4]{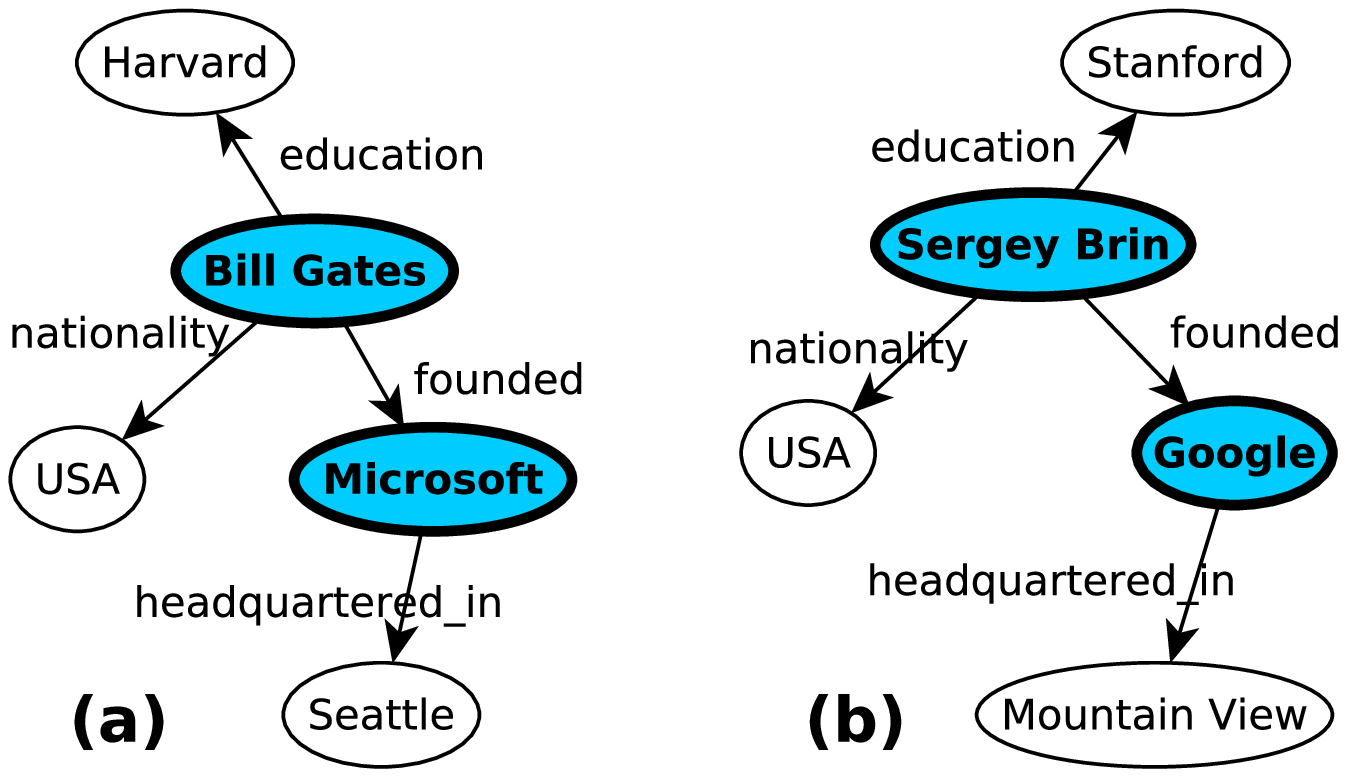}
\caption{Two Answer Graphs for Fig.\ref{fig:query-graph}(b)}
\label{fig:answer-graph1}
\end{minipage}
\end{figure*}

We define $\textsf{score}_t(t')$ by matching the inter-entity relationships
of $t$ and that of $t'$, which entails matching two graphs constructed
from $t$ and $t'$, respectively.  To this end, we define the
{\em neighborhood graph} for a tuple, which is based on the
concept of undirected path.
An \emph{undirected path} is a path whose edges are not necessarily
oriented in the same direction.  Unless otherwise stated, we will
refer to undirected path simply as ``path''.  We consider undirected
path because an edge incident on a node can represent an important
relationship with another node, regardless of its direction.
More formally, a path $p$ is a sequence of edges $e_1,\ldots,e_n$ and
we say each edge $e_i\in p$.  The path connects two nodes $v_0$ and $v_n$ through
intermediate nodes $v_1,\ldots,v_{n-1}$, where either
$e_i$=$(v_{i-1},v_i)$ or $e_i$=$(v_i,v_{i-1})$, for all
$1$$\leq$$i$$\leq$$n$.  
The length of the path, $len(p)$, is $n$ and the endpoints of the path,
$ends(p)$, are $\{v_0, v_n\}$.  Note that there is no undirected cycle
in a path, i.e., the entities $v_0,\ldots,v_n$ are all distinct.

\begin{definition}
\label{def:ng}
The \textbf{\em neighborhood graph} of query tuple $t$, denoted $H_t$, is the \emph{weakly connected
subgraph}\footnote{A directed graph is \emph{weakly connected}
if there exists an undirected path between every pair of vertices.}
of data graph $G$ that consists of all nodes reachable from at least one query entity by an
undirected path of $d$ or less edges (including query entities themselves)
and the edges on all such paths.
The \emph{path length threshold}, $d$, is an
input parameter.  More formally, the nodes and edges in $H_t$
are defined as follows:

$V(H_t) = \{v \lvert v \in V(G)$ and $\exists p$ s.t. $ends(p)$=$\{v_i,v\}$ where $v_i \in t \text{, } len(p) \leq d\}$;

$E(H_t) = \{e \lvert e \in E(G)$ and $\exists p$ s.t. $ends(p)$=$\{v_i,v\}$ where $v_i \in t, len(p) \leq d$, and $e \in p\}$.
\end{definition}

\begin{example}[Neighborhood Graph]
Given the data graph in Fig.\ref{fig:example-graph},
Fig.\ref{fig:neighbor-graph} shows the neighborhood graph for query tuple
\etuple{Jerry Yang}{Yahoo!} with path length threshold $d$=2.
The nodes in dark color are the query entities.
\end{example}

Intuitively, the neighborhood graph, by capturing how query entities and
other entities in their neighborhood are related to each other,
represents ``features'' of the query tuple that are to be matched in
query answers.  It can thus be viewed as a hidden query graph derived
for capturing user's query intent.
We are unlikely to find query answers that exactly match the
neighborhood graph.  It is however possible to find exact matches to its subgraphs.
Such subgraphs are all query graphs and
their exact matches are approximate answers that match the neighborhood graph
to different extents.

\begin{definition}
\label{def:qgraph}
A \textbf{\em query graph} $Q$
is a weakly connected subgraph of $H_t$ that contains all the query entities.
We use $\mathcal{Q}_t$ to denote the set of all query graphs for $t$, i.e.,
$\mathcal{Q}_t$=\{$Q| Q$ is a weakly connected subgraph of $H_t$ s.t.
$\forall v \in t$, $v \in V(Q)$\}.
\end{definition}

Continuing the running example, Fig.\ref{fig:query-graph} shows two query graphs
for the neighborhood graph in Fig.\ref{fig:neighbor-graph}.

Echoing the intuition behind neighborhood graph, the definitions of
answer graph and answer tuple are based on the idea that an answer
tuple is similar to the query tuple if their entities participate
in similar relationships in their neighborhoods.

\begin{definition}
\label{def:ansGraph}
An \textbf{\em answer graph} $A$ to a query graph $Q$
is a weakly connected subgraph of $G$ that is isomorphic to $Q$.
Formally, there exists a bijection $f$$:$$V(Q)$$\rightarrow$$V(A)$ such that:
\begin{list}{$\bullet$}
{ \setlength{\leftmargin}{1em} \setlength{\itemsep}{-1pt} }
\item For every edge $e = (v_i,v_j) \in E(Q)$, there exists an edge $e'=(f(v_i),f(v_j)) \in E(A)$ such that $label(e) = label(e')$;
\item For every edge $e' = (u_i,u_j) \in E(A)$, there exists $e = (f^{-1}(u_i),f^{-1}(u_j)) \in E(Q)$ such that $label(e) = label(e')$.
\end{list}
For a query tuple $t$=$\langle v_1, \ldots, v_n \rangle$,
the \textbf{\em answer tuple} in $A$ is
$t_A$=$\langle f(v_1), \ldots, f(v_n)\rangle$.
We also call $t_A$ the \emph{projection} of $A$.

We use $\mathcal{A}_Q$ to denote the set of all answer graphs of $Q$.
We note that a query graph (tuple) trivially matches itself, therefore is not
considered an answer graph (tuple).
\end{definition}

\begin{example}[Answer Graph and Answer Tuple]
Fig.\ref{fig:answer-graph} and Fig.\ref{fig:answer-graph1} each show two
answer graphs for query graphs Fig.\ref{fig:query-graph}(a) and
Fig.\ref{fig:query-graph}(b), respectively.
The answer tuples in
Fig.\ref{fig:answer-graph} are $\langle \entity{Steve Wozniak}$, $\entity{Apple Inc.} \rangle$ and \etuple{Sergey Brin}{Google}.
The answer tuples in Fig.\ref{fig:answer-graph1} are
\etuple{Bill Gates}{Microsoft} and \etuple{Sergey Brin}{Google}.
\end{example}

\begin{definition}
\label{def:scoringAnsTuple}
The set of answer tuples for query tuple $t$ are
$\{ t_A | A$$\in$$\mathcal{A}_Q, Q$$\in$$\mathcal{Q}_t\}$.
The \emph{\textbf{score of an answer $t'$}} is given by
\begin{align}
\label{eq:ranking_function}
\textsf{score}_t(t')= \max_{A \in \mathcal{A}_Q, Q\in \mathcal{Q}_t} \{{\textsf{score}_Q}(A) | t'=t_A \}
\end{align}
The score of an answer graph $A$ (${\textsf{score}_{Q}}(A)$) captures
$A$'s similarity to query graph $Q$. Its equation is given in
Sec.\ref{sec:agraphscore}.
\end{definition}

The same answer tuple $t'$ may be projected from multiple answer graphs,
which can match different query graphs.  For instance,
Figs.~\ref{fig:answer-graph}(b) and \ref{fig:answer-graph1}(b),
which are answers to different query graphs, have the same
projection---$\langle \entity{Sergey Brin}$, $\entity{Google}\rangle$.
By Eq.~(\ref{eq:ranking_function}), the highest score attained by the
answer graphs is assigned as the score of $t'$, capturing how well $t'$ matches $t$.



\section{Query Graph Discovery}
\label{sec:qgraph}

\subsection{Maximal Query Graph}\label{sec:mqg}

The concept of neighborhood graph $H_t$ (Def.\ref{def:ng}) was
formed to capture the features of a query tuple $t$ to be matched
by answer tuples.  Given a
well-connected large data graph, $H_t$ itself can be quite large, even
under a small path length threshold $d$.  For example, using Freebase
as the data graph, the query tuple \etuple{Jerry Yang}{Yahoo!} produces
a neighborhood graph with $800$K nodes and $900$K edges, for $d$=$2$.
Such a large $H_t$ makes query semantics obscure, because there might
be only few nodes and edges in it that capture important
relationships in the neighborhood of $t$.  
\reminder{discuss $d$ in experiments.}

\system{GQBE}'s query graph discovery component
constructs a weighted \emph{maximal query graph} (MQG)
from the neighborhood graph $H_t$.
MQG is expected to be drastically smaller than $H_t$
and capture only important features of the query tuple.
We now define MQG and discuss its discovery algorithm.
\reminder{CL: Provide an example of maximal query graph.}

\begin{definition}
\label{def:mqg}
The \textbf{\em maximal query graph} $MQG_t$, given a parameter $m$, is a weakly connected
subgraph of the neighborhood graph $H_t$ that maximizes total edge weight $\sum_{e} \textsf{w}(e)$
while satisfying (1) it contains all query entities in $t$ and
(2) it has $m$ edges. The weight of an edge $e$ in $H_t$, $\textsf{w}(e)$,
is defined in Sec.\ref{sec:edgeweight}. 
\end{definition}

There are two challenges in finding $MQG_t$ by directly going after the
above definition.  First, a weakly connected subgraph of $H_t$ with
exactly $m$ edges may not exist for an arbitrary $m$.
A trivial value of $m$ that
guarantees the existence of the corresponding $MQG_t$ is $|E(H_t)|$,
because $H_t$ itself is weakly connected.
This value could be too large, which is exactly why we aim to make
$MQG_t$ substantially smaller than $H_t$.
Second, even if $MQG_t$ exists for an $m$, finding it requires
maximizing the total edge weight, which is a hard problem as given in
Theorem~\ref{th:mqg}.

\begin{theorem}
\label{th:mqg}
The decision version of finding the maximal query graph $MQG_t$ for
an $m$ is NP-hard.
\end{theorem}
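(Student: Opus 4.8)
The plan is to prove hardness of the decision version of the $MQG_t$ problem: given the neighborhood graph $H_t$, the query entities $t$, the edge budget $m$, and a weight threshold $W$, decide whether $H_t$ has a weakly connected subgraph with \emph{exactly} $m$ edges that contains all query entities and whose total edge weight $\sum_e \textsf{w}(e)$ is at least $W$. I would reduce from the \emph{cardinality Steiner tree} problem, which is NP-complete: given an undirected graph $G'=(V',E')$, a terminal set $T\subseteq V'$, and an integer $B$, decide whether some subtree of $G'$ connects all terminals in $T$ using at most $B$ edges. Since weak connectivity of a directed subgraph is exactly connectivity of its underlying undirected graph, I may orient the edges of the constructed instance arbitrarily and reason about undirected connectivity throughout.

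First I would build the instance. I keep the vertices and edges of $G'$, assign every such \emph{connector} edge weight $0$, and let the query entities be $t:=T$. Then I attach a \emph{padding} path $\tau=p_0,p_1,\dots,p_N$ rooted at an arbitrary terminal $\tau\in T$, giving each of its $N$ edges weight $1$, with $N:=|V'|$. Finally I set the budget $m:=|V'|$ and the threshold $W:=m-B$. The construction is polynomial. Because a minimum connection of $T$ uses at most $|V'|-1$ edges, the budget $m$ exceeds the Steiner optimum $\mathrm{OPT}$ and the $N=m$ padding edges suffice to pad any minimal connection up to exactly $m$ edges, so a feasible $m$-edge solution always exists.

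The correctness argument rests on one observation: the padding edges touch only $\tau$, so every terminal in $T\setminus\{\tau\}$ can lie in a weakly connected solution only if it is reached through connector edges. Hence the connector edges of any feasible solution must themselves form a connected subgraph of $G'$ spanning $T$, so their number $s$ satisfies $s\ge\mathrm{OPT}$. Since a solution has exactly $m$ edges and only padding edges carry weight $1$, it contributes weight $m-s$, which is maximized precisely by minimizing $s$. Thus the optimal $MQG_t$ weight equals $m-\mathrm{OPT}$, and it is at least $W=m-B$ if and only if $\mathrm{OPT}\le B$; this gives the reduction and hence NP-hardness.

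I expect the main obstacle to be the rigid ``exactly $m$ edges'' requirement rather than an upper budget: I must guarantee both that padding is always abundant enough to fill any minimal Steiner connection up to exactly $m$ edges, and that no solution can inflate its weight by spending surplus edges inside $G'$ or by using only part of the padding. Both are ruled out because extra connector edges carry weight $0$ while every available padding edge carries weight $1$, so the optimum always pushes all slack into padding and all connectivity into a minimum Steiner connection. A secondary point to address is that the theorem concerns the concrete weights $\textsf{w}(e)$ of Sec.~\ref{sec:edgeweight}: I would either argue that hardness holds for the abstract weighted problem (treating $\textsf{w}$ as part of the input, as is standard), or, if required, realize the $\{0,1\}$ weights above as values attainable by the heuristic.
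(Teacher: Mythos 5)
Your proof is correct, but it takes a genuinely different route from the paper's. The paper reduces from the \emph{constrained Steiner network} (CSN) problem of~\cite{LZZC09}, whose combinatorial shape already mirrors the MQG definition exactly (connected subgraph, contains the terminals, \emph{exactly} $m$ edges, minimize total weight); the reduction is then a two-line weight complement, $w'(e) = W - w(e)$ with $W = \sum_{e} w(e)$, which turns minimization into maximization because every feasible solution has the same fixed edge count $m$. You instead reduce from the classical cardinality Steiner tree problem (connect terminals with at most $B$ edges), and do the work the paper delegates to~\cite{LZZC09} yourself: the pendant padding path with weight-$1$ edges attached at a single terminal $\tau$ absorbs the rigid ``exactly $m$ edges'' requirement, the weight-$0$ connector edges force all slack into the padding, and the observation that any simple path between terminals cannot detour through a pendant path yields the key inequality $s \geq \mathrm{OPT}$ for the connector edges of any feasible solution. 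What the paper's approach buys is brevity, at the cost of leaning entirely on the hardness of a less standard problem whose exact-cardinality constraint is doing all the work; what yours buys is self-containment from a textbook NP-complete problem, an explicit formulation of the decision version (threshold $W := m - B$), and an explicit treatment of feasibility (padding is always long enough since $N = |V'| \geq m - s$). Both proofs share the same mild abstraction---the constructed instance treats $\mathsf{w}(e)$ as arbitrary nonnegative input rather than as values of the $\mathsf{ief}/\mathsf{p}$ heuristic of Sec.~\ref{sec:edgeweight}, and both orient edges arbitrarily so that weak connectivity coincides with undirected connectivity---but you flag this explicitly, which the paper does not.
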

\begin{proof}
We prove the NP-hardness by reduction from the NP-hard constrained Steiner
network (CSN) problem~\cite{LZZC09}.
Given an undirected connected graph
$G_1$=$(V,E)$ with non-negative weight $w(e)$ for every $e$$\in$$E$, a
subset $V_n$$\subset$$V$, and a positive integer $m$, the CSN problem is
to find a connected subgraph $G'$=$(V',E')$ with the smallest total edge
weight, where $V_n$$\subseteq$$V'$ and $|E'|$=$m$.  The polynomial-time
reduction from the CSN problem to the MQG problem is by
transforming $G_1$ to $G_2$, where each edge $e$ is given an
arbitrary direction and a new weight $w'(e)$=$W$$-$$w(e)$, where
$W$=$\sum_{e \in E} w(e)$.  Let $V_n$ be the query tuple. The maximal
query graph $MQG_{V_n}$ found from $G_2$ provides a CSN in $G_1$, by
ignoring edge direction.  This completes the proof.
\end{proof}

\begin{algorithm}[t]
\label{alg:mqg}
\caption{Discovering the Maximal Query Graph}
\LinesNumbered
\footnotesize

\SetKw{KwAnd}{and}
\SetKw{KwDownTo}{downto}

\KwIn{neighborhood graph $H_t$, query tuple $t$, an integer $r$}

\KwOut{maximal query graph $MQG_t$}

\BlankLine
$m \leftarrow \frac{r}{\lvert t \lvert + 1}$; $V(MQG_t) \leftarrow \phi$; $E(MQG_t) \leftarrow \phi$; $\mathcal{G} \leftarrow \phi$;

\ForEach{$v_i \in t$}
{
    $G_{v_i} \leftarrow $ use DFS to obtain the subgraph containing vertices (and their incident edges) that connect to other $v_j$ in $t$ only through $v_i$\;
    $\mathcal{G} \leftarrow \mathcal{G} \cup \{G_{v_i}\}$\;
}\label{ln:beginloop}
$G_{core} \leftarrow $ use DFS to obtain the subgraph containing vertices and edges on undirected paths between query entities\;
$\mathcal{G} \leftarrow \mathcal{G} \cup \{G_{core}\}$\;\label{ln:findgraph}
\ForEach{$G \in \mathcal{G}$}
{
    $step \leftarrow 1$; $s_1 \leftarrow 0$; $s \leftarrow m$;

    \While{$s>0$}
    {
        $M_{s} \leftarrow$ the weakly connected component found from the top-$s$ edges of $G$ that contains all of $G$'s query entities;\label{ln:findms}

        \If{$M_s$ exists}
        {
          \lIf{$|E(M_{s})| = m$}{{\bf break}\label{ln:equalm}}
          
          \If{$|E(M_{s})| < m$}
          {
            $s_1 \leftarrow s$;

            \lIf{$step = -1$}{{\bf break}}
          }
          \If{$|E(M_{s})| > m$}
          {
            \If{$s_1 > 0$}
            {
                $s \leftarrow s_1$; {\bf break};
            }
            $s_2 \leftarrow s$; $step \leftarrow -1$;
          }
        }
        $s \leftarrow s+step$;
    }

    \lIf{$s = 0$}{$s \leftarrow s_2$}

    $V(MQG_t) \leftarrow V(MQG_t) \cup V(M_s)$\;
    $E(MQG_t) \leftarrow E(MQG_t) \cup E(M_s)$\;
}
\end{algorithm}

Based on the theoretical analysis, we present a greedy method (Alg.\ref{alg:mqg})
to find a plausible sub-optimal graph of edge cardinality \emph{close} to a given $m$.
The value of $m$ is empirically chosen to be much smaller than $|E(H_t)|$.
Consider edges of $H_t$ in descending order of weight $\textsf{w}(e)$.
We use $G_s$ to denote the graph formed by the top $s$ edges with the largest weights, which itself
may not be weakly connected.  We use $M_s$ to denote the weakly connected
component of $G_s$ containing all query entities in $t$, if it exists.
Our method finds the smallest $s$ such that $|E(M_s)|$=$m$ (Line~\ref{ln:equalm}).  If such an $M_s$ does not
exist, the method chooses $s_1$, the largest $s$ such that $|E(M_s)|$$<$$m$.
If that still does not exist, it chooses $s_2$, the smallest $s$ such that $|E(M_s)|$$>$$m$,
whose existence is guaranteed because $|E(H_t)|$$>$$m$.
For each $s$ value, the method employs a depth-first search (DFS) starting from
a query entity in $G_s$, if present, to check the existence of $M_s$ (Line~\ref{ln:findms}).

The $M_s$ found by this method may be unbalanced.
Query entities with more neighbors in $H_t$ likely have
more prominent representation in the resulting $M_s$.
A balanced graph should instead have a fair number of edges
associated with each query entity.   Therefore, we
further propose a divide-and-conquer mechanism to construct a balanced
$MQG_t$.  The idea is to break $H_t$ into $n$$+$$1$ weakly connected subgraphs.
One is the {\em core graph}, which includes all the $n$ query entities
in $t$ and all undirected paths between query entities.
Other $n$ subgraphs are for the $n$ query entities individually, where
the subgraph for entity $v_i$ includes all entities (and their incident
edges) that connect to other query entities only through $v_i$.
The subgraphs are identified by a DFS starting
from each query entity (Lines~\ref{ln:beginloop}-\ref{ln:findgraph} of Alg.\ref{alg:mqg}).  During the DFS from $v_i$, all edges
on the undirected paths reaching any other query entity within distance
$d$ belong to the core graph, and other edges belong to $v_i$'s individual subgraph.
The method then applies the aforementioned greedy algorithm to find $n$$+$$1$
weakly connected components, one for each subgraph, that contain the query entities
in corresponding subgraphs.  Since the core graph
connects all query entities, the $n$$+$$1$ components altogether form a weakly
connected subgraph of $H_t$, which becomes the final $MQG_t$.
For an empirically chosen small $r$ as the target size of $MQG_t$, we set
the target size for each individual component to be $\frac{r}{n+1}$, aiming at
a balanced $MQG_t$.

\spara{Complexity Analysis of Alg.\ref{alg:mqg}}\hspace{2mm}
In the aforementioned divide-and-conquer method,
if on average there are $r'$=$\frac{\lvert E(H_t) \lvert}{n+1}$ edges
in each subgraph, finding the subgraph by DFS and sorting its $r'$
edges takes $O(r'\log r')$ time.
Given the top-$s$ edges of a subgraph, checking if the weakly
connected component $M_s$ exists using DFS requires $O(s)$ time.
Suppose on average $c$ iterations are required to find the appropriate $s$.
Let $m$=$\frac{r}{n+1}$ be the average target edge cardinality of each
subgraph.  Since the method initializes $s$ with $m$, the largest value
$s$ can attain is $m$$+$$c$.  So the time for discovering
$M_s$ for each subgraph is $O(r' \log r'$$+$$c$$\times$$(m$$+$$c$)).
For all $n$$+$$1$ subgraphs, the total time required
to find the final $MQG_t$ is $O((n$$+$$1)\times(r' \log r'$$+$$c$$\times$$(m$$+$$c)))$.
For the queries used in our experiments on Freebase, given an empirically
chosen small $r$=$15$, $s$$\ll$$\lvert E(H_t)\lvert$ and on average $c$=$22$.

\subsection{Edge Weighting}\label{sec:edgeweight}
The definition of $MQG_t$ (Def.\ref{def:mqg}) depends on edge
weights.  There can be various plausible weighting schemes.
We propose a weighting function based on several heuristic ideas.
The weight of an edge $e$ in $H_t$, $\textsf{w}(e)$,
is proportional to its inverse edge label frequency ($\textsf{ief}(e)$) and
inversely proportional to its participation degree ($\textsf{p}(e)$), given by
\begin{align}
\label{eq:edge_wt_function}
\textsf{w}(e) = \textsf{ief}(e)\;/\;\textsf{p}(e)
\end{align}

\spara{Inverse Edge Label Frequency}\hspace{2mm}
Edge labels that appear frequently in the entire
data graph $G$ are often less important. For example,
edges labeled \edge{founded} (for a company's founders) can be rare and more important than edges labeled
\edge{nationality} (for a person's nationality).
We capture this by the \emph{inverse edge label frequency}.
\begin{align}
\label{eq:ief}
\textsf{ief}(e) = \log\;(|E(G)|\;/\;\#label(e))
\end{align}
where $|E(G)|$ is the number of edges in $G$, and $\#label(e)$ is the
number of edges in $G$ with the same label as $e$.

\spara{Participation Degree}\hspace{2mm}
The {\em participation degree} $p(e)$ of an edge $e$=$(u,v)$ is the number of edges in $G$ that share the same label and one of $e$'s end nodes. Formally,
\begin{align}
\label{eq:participation}
\hspace{-2mm} \textsf{p}(e)=\lvert\ \{e'\textrm{=}(u',v')\ |\ label(e)\textrm{=}label(e'), u'\textrm{=}u \vee v'\textrm{=}v\}\ \lvert
\end{align}

While $\textsf{ief}(e)$ captures the global frequencies of edge
labels, $\textsf{p}(e)$ measures their local frequencies---an edge
is less important if there are other edges incident on the
same node with the same label.  For instance, \edge{employment}
might be a relatively rare edge globally but not necessarily
locally to a company.  Specifically, consider the edges representing
the \edge{employment} relationship between a company
and its \emph{many} employees and the edges for the \edge{board member}
relationship between the company and its \emph{few} board members.  The latter
edges are more significant.


Note that $\textsf{ief}(e)$ and $\textsf{p}(e)$ are precomputed
offline, since they are query-independent and only rely on the
data graph $G$.
\reminder{CL: do they depend on $G$ or reduced neighborhood graph? Can they be precomputed?}

\subsection{Preprocessing: Reduced Neighborhood Graph}

The discussion so far focuses on discovering $MQG_t$ from $H_t$.
The neighborhood graph $H_t$ may have clearly unimportant edges.
As a preprocessing step, \system{GQBE} removes such edges from $H_t$
before applying Alg.\ref{alg:mqg}.  The reduced size of $H_t$
not only makes the execution of Alg.\ref{alg:mqg} more efficient
but also helps prevent clearly unimportant edges from getting into $MQG_t$.

Consider the neighborhood graph $H_t$ in Fig.\ref{fig:neighbor-graph}, based
on the data graph excerpt in Fig.\ref{fig:example-graph}.
Edge $e_1$=\edgeends{Jerry Yang}{Stanford} and $label$($e_1$)=\edge{education}.
Two other edges labeled \edge{education}, $e_2$ and $e_3$, are also
incident on node \entity{Stanford}.  The neighborhood
graph from a complete real-world data graph may contain many such edges
for people graduated from Stanford University.
Among these edges, $e_1$ represents an important relationship between
\entity{Stanford} and query entity \entity{Jerry Yang}, while other edges represent
relationships between \entity{Stanford} and other entities, which are
deemed unimportant with respect to the query tuple.

We formalize the definition of \emph{unimportant edges} as follows.
Given an edge $e$=$(u,v) \in E(H_t)$, $e$ is unimportant if it is
unimportant from the perspective of its either end, $u$ or $v$, i.e.,
$\text{if } e \in UE(u)\ \text{or}\ e \in UE(v)$.
Given a node $v \in V(H_t)$, $E(v)$ denotes the edges
incident on $v$ in $H_t$.  $E(v)$ is partitioned into three disjoint subsets---the important edges $IE(v)$, the unimportant edges $UE(v)$ and the rest---defined as follows:
{\flushleft \vspace{-2mm}$IE(v)$=}

\vspace{-1mm}$\{e \in E(v)\ \lvert\ \exists v_i$$\in$$t, p \text{ s.t. } e$$\in$$p, ends(p)$=$\{v,v_i\},len(p)$$\leq$$d\}$;
{\flushleft \vspace{-1mm}$UE(v)$=}

\vspace{-1mm}$\{e \in E(v)\ \lvert\ e$$\notin$$IE(v), \exists e'$$\in$$IE(v) \text{ s.t. } label(e)$=$label(e')$,

$(e$=$(u,v) \wedge e'$=$(u',v)) \vee (e$=$(v,u) \wedge e'$=$(v,u'))\}$.\\
An edge $e$ incident on $v$ belongs to $IE(v)$ if
there exists a path between $v$ and any query entity in the query tuple $t$,
through $e$, with path length at most $d$.
For example, edge $e_1$ in Fig.\ref{fig:neighbor-graph} belongs to $IE(\entity{Stanford})$.
An edge $e$ belongs to $UE(v)$ if (1) it does not belong to $IE(v)$
(i.e., there exists no such aforementioned path) and (2) there exists
$e' \in IE(v)$ such that $e$ and $e'$ have the same label and they are
both either incoming into or outgoing from $v$.
By this definition, $e_2$ and $e_3$ belong to $UE(v)$ in Fig.\ref{fig:neighbor-graph},
since $e_1$ belongs to $IE(v)$.  In the same
neighborhood graph, $e_4$ is in neither $IE(v)$ nor $UE(v)$.

All edges deemed unimportant by the above definition are removed from $H_t$.
The resulting graph may not be weakly connected anymore and may have
multiple weakly connected components.~\footnote{A weakly connected component of a directed graph
is a maximal subgraph where an undirected path exists for every pair of vertices.}
Theorem~\ref{th:damping-factor} states that one of the components---called
the \emph{reduced neighborhood graph}, denoted $H'_t$---contains all query
entities in $t$.
In other words, $H'_t$ is the largest weakly connected subgraph of $H_t$
containing all query entities and no unimportant edges.
Alg.\ref{alg:mqg} is applied on $H'_t$ (instead of $H_t$)
to produce $MQG_t$.
Since the techniques in the ensuing discussion only operate on $MQG_t$,
the distinction between $H_t$ and $H'_t$ will not be further noted.

\begin{theorem}
\label{th:damping-factor}
Given the neighborhood graph $H_t$ for a query tuple $t$, the reduced
neighborhood graph $H'_t$ always exists.
\end{theorem}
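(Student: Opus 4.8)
The plan is to show that deleting the unimportant edges never separates two query entities, so that the single weakly connected component inheriting all of $t$'s query entities is precisely $H'_t$. Since $H_t$ is weakly connected by Def.~\ref{def:ng}, it suffices to prove the following: for every pair of query entities $v_i,v_j\in t$, an undirected path between them survives after all edges in $\bigcup_v UE(v)$ are removed. Existence, nonemptiness, and the ``largest'' (hence unique) qualification for $H'_t$ then follow immediately by taking the component of this surviving structure.

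The key technical device will be a \emph{subpath argument} showing that an edge on a short undirected path between two query entities is important from \emph{both} endpoints, hence retained. Concretely, let $p$ be any undirected path with $ends(p)=\{v_i,v_j\}$, both query entities, and $len(p)\leq d$. For an edge $e=(a,b)$ on $p$, the subpath of $p$ from $a$ to $v_j$ passes through $e$ and has length at most $len(p)\leq d$, witnessing $e\in IE(a)$; symmetrically the subpath from $b$ to $v_i$ witnesses $e\in IE(b)$. As $e\in IE(a)\cap IE(b)$, the definition of $UE$ gives $e\notin UE(a)$ and $e\notin UE(b)$, so $e$ is never deleted. Thus any two query entities joined by a path of length at most $d$ remain connected in $H'_t$, which settles the case of pairwise-close query entities directly.

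To reach the general case I would argue by contradiction. Suppose the reduction splits $t$ into two nonempty groups of query entities lying in distinct components, and fix a minimal edge cut $C$ separating them. Every edge of $C$ was present in $H_t$ (whose weak connectivity forces a crossing path) yet was deleted, so each $e=(u,v)\in C$ satisfies, say, $e\in UE(u)$. By definition this yields a witness $e'\in IE(u)$ with the same label and the same orientation at $u$, where $e'$ lies on a path of length at most $d$ from $u$ to some query entity. The plan is to use $e'$ and its witnessing short path, together with the structural fact that both endpoints of every edge of $H_t$ lie within distance $d$ of a \emph{common} query entity, to rebuild a crossing connection avoiding $C$, contradicting that $C$ is a cut.

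The hard part will be exactly this rerouting step. It is delicate because the witness $e'$ only certifies that $u$ reaches \emph{some} query entity along a short path, not that this path lands on the far side of the cut, nor that the path is itself free of deleted edges; a rigorous treatment must therefore recurse on the edges of $e'$'s witnessing path (each important from its endpoint nearer the target query entity, again by the subpath argument) while tracking that every node within distance $d$ of a query entity stays attached to that entity's retained skeleton. I expect essentially all the work to concentrate in making this induction airtight and in pinning down the precise structural hypothesis under which it closes; the surrounding bookkeeping (nonemptiness and maximality of the component) is routine.
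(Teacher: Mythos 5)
Your second paragraph is, in substance, exactly the paper's proof, written more carefully than the paper itself. The paper argues by contradiction: if deleting unimportant edges disconnects two query entities $v_i$ and $v_j$, it takes (citing Def.~\ref{def:ng}) a path $p$ between them of length at most $d$, and observes that every edge $e=(u,v)$ on $p$ belongs to $IE(u)\cap IE(v)$ --- precisely your subpath argument --- hence lies in neither $UE(u)$ nor $UE(v)$ and survives, a contradiction. In other words, the paper grants itself exactly the hypothesis you isolate as the ``pairwise-close'' case and stops there; had you invoked the same reading of Def.~\ref{def:ng}, you would have been done after your second paragraph.

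The genuine gap is your third paragraph, and it is not a fixable one: in the generality you are attempting (query entities connected in $H_t$ only through chains of overlapping distance-$d$ balls, with no pairwise path of length at most $d$), the statement is false, so no cut-and-reroute induction can close. Concretely, take $d=2$, $t=\langle A,B\rangle$, and let $H_t$ consist of the undirected path $A$--$x$--$y$--$z$--$B$ together with a node $w$ and edges $(x,w)$ and $(w,A)$, where $(x,w)$ carries the same label as $(x,y)$ and both are directed out of $x$. Every node and edge of $H_t$ lies within distance $2$ of a query entity and $H_t$ is weakly connected, so it conforms to the formal clauses of Def.~\ref{def:ng}; but $A$ and $B$ are joined by no path of length at most $2$. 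Now $(x,y)\notin IE(x)$, since any path from $x$ through $(x,y)$ reaches only $y$ or $z$ within two steps; yet $(x,w)\in IE(x)$ via the path $x$--$w$--$A$, has the same label, and the same orientation at $x$. Hence $(x,y)\in UE(x)$ and is deleted (membership in $IE(y)$ does not save it, since unimportance from either end suffices), leaving components $\{A,x,w\}$ and $\{y,z,B\}$: no component contains all query entities, so $H'_t$ does not exist. The conclusion is that Theorem~\ref{th:damping-factor} holds only under the hypothesis your second paragraph already uses --- every pair of query entities joined by a path of length at most $d$ --- which the paper's proof simply asserts as a consequence of Def.~\ref{def:ng}. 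Your instinct that ``essentially all the work'' would concentrate in the rerouting step was right, but the resolution is that the step is impossible rather than hard.
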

\begin{proof} 
We prove by contradiction.  Suppose that, after removal of all unimportant
edges, $H_t$ becomes a disconnected graph, of which none of the weakly
connected components contains all the query entities.  The deletion of unimportant edges
must have disconnected at least a pair of query entities, say, $v_i$ and $v_j$.
By Def.~\ref{def:ng}, before removal of unimportant edges, $H_t$ must have at least 
a path $p$ of length at most $d$ between $v_i$ and $v_j$.  
By the definition of unimportant edges, every edge $e$=$(u,v)$ on $p$ belongs to 
both $IE(u)$ and $IE(v)$ and thus cannot be an unimportant edge. 
However, the fact that $v_i$ and $v_j$ become disconnected implies that $p$ consists of 
at least one unimportant edge which is deleted. 
This presents a contradiction and completes the proof. 
\end{proof}

\subsection{Multi-tuple Queries}
\label{sec:multituple}

The query graph discovery component of \system{GQBE} essentially derives
a user's query intent from input query tuples.  For that, a single query
tuple might not be sufficient.  While the experiment results in
Sec.\ref{sec:exp} show that a single-tuple query obtains excellent
accuracy in many cases, the results also exhibit that allowing multiple
query tuples often helps improve query answer accuracy.
This is because important relationships commonly associated with multiple
query tuples express the user intent more precisely. For instance,
suppose a user has provided two query tuples together---\etuple{Jerry Yang}{Yahoo!}
and \etuple{Steve Wozniak}{Apple Inc.}.  The query entities in both tuples
share common properties such as \edge{places\_lived} in \entity{San Jose}
and \edge{headquartered\_in} a city in \entity{California}, as shown in
Fig.\ref{fig:example-graph}.  This might indicate that the user is
interested in finding people from San Jose who founded technology companies
in California.

Given a set of tuples $T$, \system{GQBE} aims at finding top-$k$ answer
tuples similar to $T$ collectively.  To accomplish this, one approach is to
discover and evaluate the maximal query graphs (MQGs) of individual query
tuples.  The scores of a common answer tuple for multiple query tuples
can then be aggregated.  This has two potential drawbacks: (1) Our concern
of not being able to well capture user intent still remains.
If $k$ is not large enough, a good answer tuple may not appear in
enough individual top-$k$ answer lists, resulting in poor aggregated score.
(2) It can become expensive to evaluate multiple MQGs.

\begin{figure}[tb!]
\centering
  \includegraphics[width = 0.85\linewidth, keepaspectratio = true, scale=0.25]{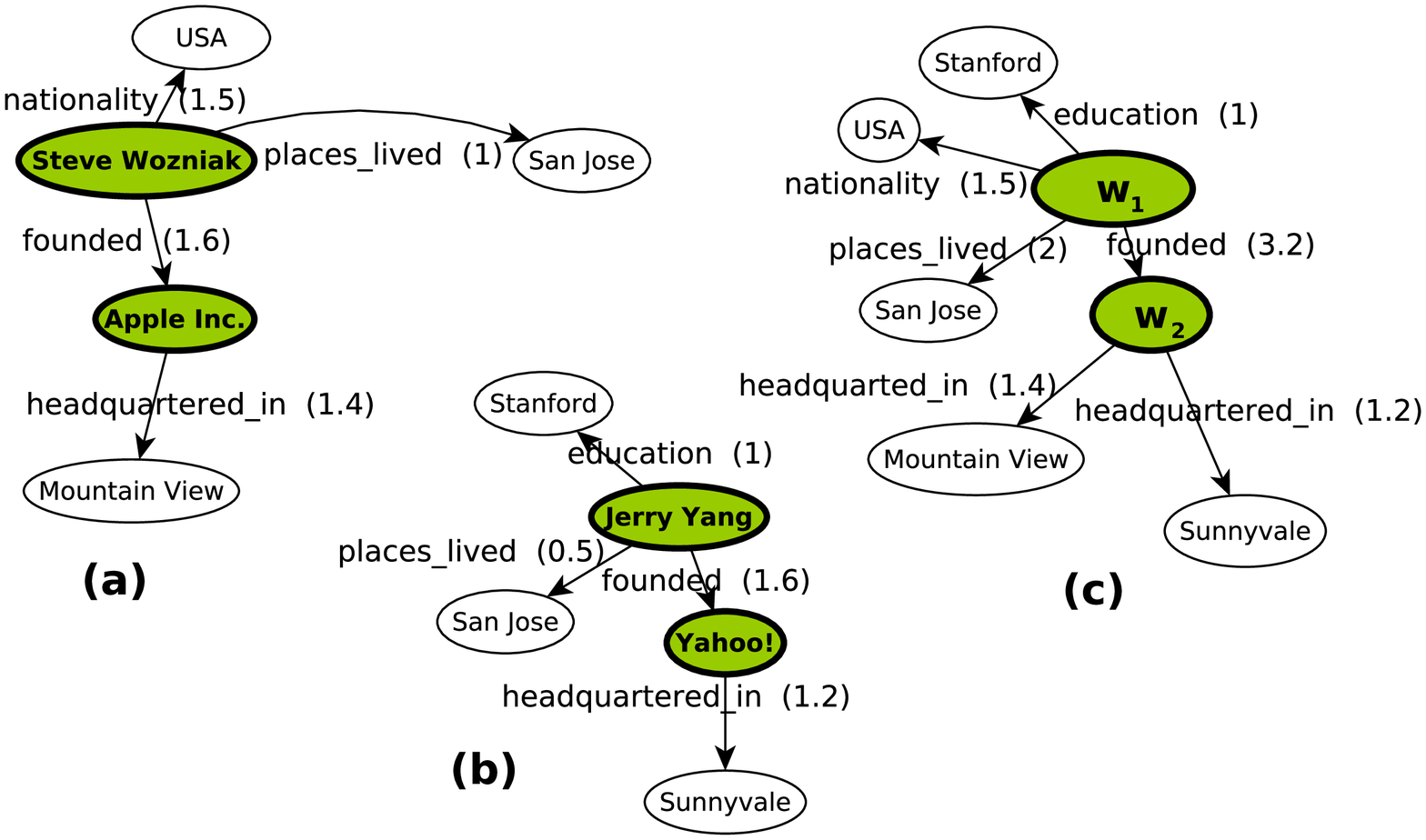}
\caption{Merging Maximal Query Graphs} 
\label{fig:multiple-tuple}
\end{figure}

We approach this problem by producing a merged and re-weighted MQG that
captures the importance of edges with respect to their presence across
multiple MQGs. The merged MQG is then processed by the same method for single-tuple
queries.  \system{GQBE} employs a simple strategy to
merge multiple MQGs.  The individual MQG for a query tuple
$t_i$=$\langle v_1^i, v_2^i, \ldots, v_n^i \rangle$$\in$$T$
is denoted $M_{t_i}$.  A virtual MQG $M_{t_i}'$ is created for every
$M_{t_i}$ by replacing the query entities $v_1^i, v_2^i, \ldots, v_n^i$
in $M_{t_i}$ with corresponding virtual entities $w_1, w_2, \ldots, w_n$ in
$M_{t_i}'$.  Formally, there exists a bijective function $g$$:$$V(M_{t_i})$$\rightarrow$$V(M_{t_i}')$
such that (1) $g(v_j^i)$=$w_j$ and $g(v)$=$v$ if $v$$\notin$$t_i$, and
(2) $\forall e$=$(u,v)$$\in$$E(M_{t_i})$, there exists an edge $e'$=$(g(u),g(v))$ $\in$$E(M_{t_i}')$
such that $label(e)$=$label(e')$; $\forall e'$=$(u',v')$$\in$$E(M_{t_i}')$,
$\exists$$e$ =$(g^{-1}(u'),g^{-1}(v'))$$\in$$E(M_{t_i})$ such that $label(e)$=$label(e')$.

The merged MQG is denoted $MQG_T$.  It is produced by including vertices
and edges in all $M_{t_i}'$, merging identical virtual and regular vertices,
and merging identical edges that bear the same label and the same vertices
on both ends.  
Formally,
\begin{align}
\textstyle V(MQG_T)=\bigcup\limits_{t_i \in T} V(M_{t_i}') \text{ and } E(MQG_T)=\bigcup\limits_{t_i \in T} E(M_{t_i}').\nonumber
\end{align}
The edge cardinality of $MQG_T$ might be larger than the target size $r$.
Thus Alg.\ref{alg:mqg} proposed in Sec.\ref{sec:mqg} is
also used to trim $MQG_T$ to a size close to $r$.
In $MQG_T$, the weight of an edge $e$ is given by $c*\textsf{w}_{max}(e)$,
where $c$ is the number of $M_{t_i}'$ containing $e$ and $\textsf{w}_{max}(e)$
is its maximal weight among all such $M_{t_i}'$.

\begin{example}[Merging Maximal Query Graphs]
Let Figs.~\ref{fig:multiple-tuple} (a) and (b) be the $M_{t_i}$
for query tuples \etuple{Steve Wozniak}{Apple Inc.} and
\etuple{Jerry Yang}{Yahoo!}, respectively.
Fig.\ref{fig:multiple-tuple}(c) is the merged $MQG_T$.
Note that entities \entity{Steve Wozniak} and \entity{Jerry Yang} are
mapped to $w_1$ in their respective $M_{t_i}'$ (not shown,
for its mapping from $M_{t_i}$ is simple) and are merged into
\entity{$w_1$} in $MQG_T$.  Similarly, entities \entity{Apple Inc.} and
\entity{Yahoo!} are mapped and merged into $w_2$.
The two \edge{founded} edges, appearing in both individual $M_{t_i}$ and
sharing identical vertices on both ends ($w_1$ and $w_2$) in the
corresponding $M_{t_i}'$, are merged in $MQG_T$.  Similarly the two
\edge{places\_lived} edges are merged. However, the two
\edge{headquartered\_in} edges are not merged, since they share only one
end ($w_2$) in $M_{t_i}'$.  The edges \edge{nationality} and
\edge{education}, which appear in only one $M_{t_i}$, are also present in
$MQG_T$.  The number next to each edge is its weight.
\end{example}

In comparison to evaluating a single-tuple query, the extra overhead
in handling a multi-tuple query includes creating multiple MQGs, which
is $\lvert T \lvert$ times the average cost of discovering an individual
MQG, and merging them, which is linear in the total edge cardinality of all MQGs.

\section{Answer Space Modeling}
\label{sec:modeling}

Given the maximal query graph $MQG_t$ for a tuple $t$, we model the
space of possible query graphs by a lattice.  We further discuss the
scoring of answer graphs by how well they match query graphs.

\subsection{Query Lattice}
\begin{definition}\label{def:lattice}
The \textbf{\em query lattice} $\lattice$ is a partially ordered set (poset)
($\mathcal{QG}_t$, $\prec$), where $\prec$ represents the
subgraph-supergraph subsumption relation and $\mathcal{QG}_t$ is the subset of query
graphs (Def.\ref{def:qgraph}) that are subgraphs of $MQG_t$, i.e.,
$\mathcal{QG}_t$$=$$\{Q|Q\in \mathcal{Q}_t \text{ and } Q \preceq MQG_t\}$.
The top element (root) of the poset is thus $MQG_t$.  When represented by a
Hasse diagram, the poset is a directed acyclic graph, in which each node
corresponds to a distinct query graph in $\mathcal{QG}_t$.  Thus we shall use the terms
{\em lattice node} and {\em query graph} interchangeably.
The {\em children} ({\em parents}) of a lattice node $Q$ are its subgraphs
(supergraphs) with one less (more) edge, as defined below.
\begin{align}
&\textsf{Children}(Q) =  \{Q' | Q' \in \mathcal{QG}_t, Q' \prec Q, |E(Q)|\!-\!|E(Q')|\!=\!1 \}& \nonumber\\
&\textsf{Parents}(Q)  =  \{Q' | Q' \in \mathcal{QG}_t, Q \prec Q', |E(Q')|\!-\!|E(Q)|\!=\!1\}& \nonumber
\end{align}
\end{definition}

The leaf nodes of $\lattice$ constitute of the \emph{minimal query trees}, which
are those query graphs that cannot be made any simpler and yet still keep all the
query entities connected.

\begin{definition}
A query graph $Q$ is a \textbf{\em minimal query tree} if none of its subgraphs is also a query graph.
In other words, removing any edge from $Q$ will disqualify it from being a query graph---the resulting graph
either is not weakly connected or does not contain all the query entities. Note that such a $Q$ must be
a tree.
\end{definition}

\begin{figure}[tb!]
\centering
  \includegraphics[width = 1.0\linewidth, keepaspectratio = true, scale=0.4]{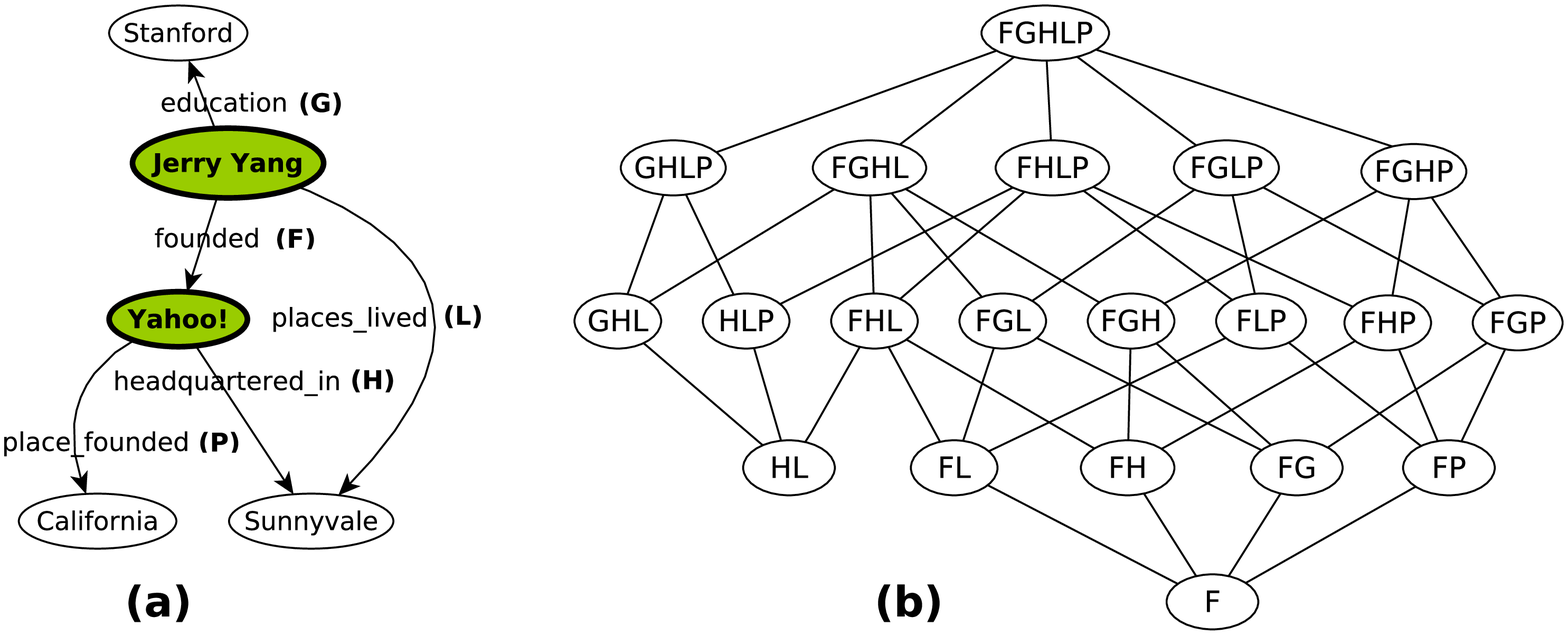}
\caption{Maximal Query Graph and Query Lattice}
\label{fig:lattice}
\end{figure}

\begin{example}[Query Lattice and Minimal Query Tree]
Fig.\ref{fig:lattice}(a) shows a maximal query graph $MQG_t$, which
contains two query entities in shaded circles and five edges $F,G,H,L,$
and $P$.  Its corresponding query lattice $\lattice$ is in
Fig.\ref{fig:lattice}(b).  The root node of $\lattice$, denoted $FGHLP$,
represents $MQG_t$ itself.  The two bottom nodes, $F$ and
$HL$, are the two minimal query trees.  Each lattice node is a distinct
subgraph of $MQG_t$.  For example, the node $FLP$ represents a query graph
with only edges $F, L$ and $P$.  Note that there is no lattice node for
$GLP$, which is not a valid query graph since it is not connected.
\end{example}

The construction of the query lattice, i.e., the generation of query graphs
corresponding to its nodes, is integrated with its exploration.  In other
words, the lattice is built in a ``lazy'' manner---a lattice node is not
generated until the query algorithm (Sec.\ref{sec:processing}) must
evaluate it.  The lattice nodes are generated in a bottom-up way.  A node
is generated by adding exactly one appropriate edge to the query graph
for one of its children.  The generation of bottom nodes, i.e., the minimal query trees,
is described below.

By definition, a minimal query tree can only contain edges on undirected paths
between query entities.  Hence, it must be a subgraph of the weakly connected
component $M_s$ found from the core graph described in Sec.\ref{sec:mqg}.
To generate all minimal query trees, our method enumerates all distinct spanning
trees of $M_s$ by the technique in~\cite{GabowM78} and then trim them.
Specifically, given one such spanning tree, all non-query entities (nodes) of degree one
along with their edges are deleted.  The deletion is performed iteratively until there is no
such node. The result is a minimal query tree.  Only distinct minimal query trees are kept.
Enumerating all spanning trees in a large graph is expensive.
However, in our experiments on the Freebase dataset, the $MQG_t$ discovered by the
approach in Sec.\ref{sec:qgraph} mostly contains less than $15$ edges.
Hence, the $M_s$ from the core graph is also empirically small, for which the cost
of enumerating all spanning trees is negligible. 

\subsection{Answer Graph Scoring Function}\label{sec:agraphscore}
The score of an answer graph $A$ (${\textsf{score}_{Q}}(A)$) captures
$A$'s similarity to the query graph $Q$. It is defined below and
is to be plugged into Eq.~(\ref{eq:ranking_function}) for defining
answer tuple score.
\begin{align}
\begin{aligned}
\label{eq:scoreFunction}
{\textsf{score}_{Q}}(A) & = {\textsf{s\_score}}(Q) + {\textsf{c\_score}_{Q}}(A)\\
{\textsf{s\_score}}(Q) & = \;\;\;\;\; \sum_{e \in E(Q)}{\textsf{w}(e)}\\
{\textsf{c\_score}_{Q}}(A) & = \!\!\!\!\!\! \sum_{\substack{ e=(u,v) \in E(Q) \\ e'=(f(u), f(v)) \in E(A)}}\!\!\!\!\!\!\!\!\!\!\!\!\textsf{match}(e, e')
\end{aligned}
\end{align}

In Eq.~(\ref{eq:scoreFunction}), ${\textsf{score}_{Q}}(A)$ sums up
two components---the \emph{structure score} of $Q$ (${\textsf{s\_score}}(Q)$)
and the \emph{content score} for $A$ matching $Q$ (${\textsf{c\_score}_{Q}}(A)$).
${\textsf{s\_score}}(Q)$ is the total edge weight of $Q$.
It measures the important structure in $MQG_t$ that is captured by $Q$ and thus by $A$.
${\textsf{c\_score}_{Q}}(A)$ is the total extra credit for identical nodes among
the matching nodes in $A$ and $Q$ given by $f$---the bijection between
$V(Q)$ and $V(A)$ as in Def.\ref{def:ansGraph}.
For instance, among the $6$ pairs of matching nodes between
Fig.\ref{fig:query-graph}(a) and Fig.\ref{fig:answer-graph}(a),
the identical matching nodes are \entity{USA}, \entity{San Jose}
and \entity{California}. The rationale for the extra credit is that although
node matching is not mandatory, the more nodes are matched, the more
similar $A$ and $Q$ are.

The extra credit is defined by the following function $\textsf{match}(e,e')$.
Note that it does not award an identical matching node excessively.
Instead, only a fraction of $\textsf{w}(e)$ is awarded, where the
denominator is either $|E(u)|$ or $|E(v)|$. ($E(u)$ are the edges
incident on $u$ in $MQG_t$.) This heuristic is based on that,
when $u$ and $f(u)$ are identical, many of their neighbors can be also
identical matching nodes.
\begin{multline}
\label{eq:match}
	\textsf{match}(e,e')\text{=}
	\begin{cases}
	\frac{\textsf{w}(e)}{\lvert{E(u)}\lvert} & \text{if } u\text{=}f(u)\\
	\frac{\textsf{w}(e)}{\lvert{E(v)}\lvert} & \text{if } v\text{=}f(v)\\
	\frac{\textsf{w}(e)}{min(\lvert{E(u)}\lvert,\lvert{E(v)}\lvert)} & \text{if } u\text{=}f(u), v\text{=}f(v)\\
	0 & \text{otherwise}
	\end{cases}
\end{multline}

In discovering $MQG_t$ from $H_t$ by Alg.\ref{alg:mqg}, the weights
of edges in $H_t$ are defined by Eq.~(\ref{eq:edge_wt_function}) which
does not consider an edge's distance from the query tuple.
The rationale behind the design is to obtain a balanced $MQG_t$ which
includes not only edges incident on query entities but also those
in the larger neighborhood.  For scoring answers by
Eq.~(\ref{eq:scoreFunction}) and Eq.~(\ref{eq:match}), however,
our empirical observations show it is imperative to differentiate the
importance of edges in $MQG_t$ with respect to query entities,
in order to capture how well an answer graph matches $MQG_t$.
Edges closer to query entities convey more meaningful relationships
than those farther away. Hence, we define edge depth ($\textsf{d}(e)$)
as follows. The larger $\textsf{d}(e)$ is, the less important $e$ is.

\spara{Edge Depth}\hspace{2mm}
The depth $\textsf{d}(e)$ of an edge $e$=$(u,v)$ is its smallest
distance to any query entity $v_i \in t$,  \ie
\begin{align}
\textsf{d}(e) = \min_{v_i \in t} \min_{u,v} \{\textsf{dist}(u,v_i),\textsf{dist}(v,v_i)\}
\end{align}
Here, $\textsf{dist}(.,.)$ is the shortest length of all undirected paths
in $MQG_t$ between the two nodes.

In summary, \system{GQBE} uses Eq.~(\ref{eq:edge_wt_function}) as the definition
of $\textsf{w}(e)$ in weighting edges in $H_t$.  After $MQG_t$ is discovered from $H_t$
by Alg.\ref{alg:mqg}, it uses the following Eq.~(\ref{eq:mqg_weighting_function}) as
the definition of $\textsf{w}(e)$ in weighting edges in $MQG_t$.
Eq.~(\ref{eq:mqg_weighting_function}) incorporates $\textsf{d}(e)$ into Eq.~(\ref{eq:edge_wt_function}).
The answer graph scoring functions Eq.~(\ref{eq:scoreFunction}) and Eq.~(\ref{eq:match}) are based
on Eq.~(\ref{eq:mqg_weighting_function}).
\begin{align}
\label{eq:mqg_weighting_function}
\textsf{w}(e) = \textsf{ief}(e)\;/\;(\textsf{p}(e) \times \textsf{d}^2(e))
\end{align}

\section{Query Processing}
\label{sec:processing}

The query processing component of \system{GQBE} takes the maximal query
graph $MQG_t$ (Sec.\ref{sec:qgraph}) and the query lattice $\lattice$
(Sec.\ref{sec:modeling}) and finds answer graphs matching the query
graphs in $\lattice$. Before we discuss how $\lattice$ is
evaluated (Sec.\ref{sec:bestfirst}), we introduce the storage model and query plan for
processing one query graph (Sec.\ref{sec:onequery}).

\subsection{Processing One Query Graph}\label{sec:onequery}
The abstract data model of knowledge graph can be represented
by the Resource Description Framework (RDF)---the standard Semantic Web
data model.  In RDF, a data graph is parsed into a set of triples, each
representing an edge $e$=$(u,v)$.  A triple has the form
(subject, property, object), corresponding to ($u, label(e), v$).
Among different schemes of RDF data management, one important approach
is to use relational database techniques to store and query RDF graphs.
To store a data graph, we adopt this approach and, particularly, the
vertical partitioning method~\cite{abadi07}.
This method partitions a data graph into multiple
two-column tables.  Each table is for a distinct edge label and stores
all edges bearing that label.  The two columns are ($subj, obj$), for
the edges' source and destination nodes, respectively. For efficient query
processing, two in-memory search structures (specifically, hash tables)
are created on the table, using $subj$ and $obj$ as the hash keys, respectively.
The whole data graph is hashed in memory by this way, before any query comes in.

Given the above storage scheme, to evaluate a query graph is to process a
multi-way join query.
For instance, the query graph in Fig.\ref{fig:lattice}(a) corresponds to
{\footnotesize \textsf{SELECT F.subj, F.obj FROM F,G,H,L,P WHERE F.subj=G.sbj
AND F.obj=H.subj AND F.subj=L.subj AND F.obj=P.subj AND H.obj=L.obj.}}
We use right-deep hash-joins to process such a query.
Consider the topmost join operator in a join tree for
query graph $Q$.  Its left operand is the \emph{build relation}
which is one of the two in-memory hash tables for an edge $e$.
Its right operand is the \emph{probe relation} which is a
hash table for another edge or a join subtree for
$Q'$=$Q$$-$$e$ (i.e., the resulting graph of removing $e$ from $Q$).
For instance, one possible join tree for the aforementioned
query is $G$$\bowtie$$(F$$\bowtie$$(P$$\bowtie$$(H$$\bowtie$$L)))$.
With regard to its topmost join operator, the left operand
is $G$'s hash table that uses $G.sbj$ as the hash key, and the right
operand is $(F$$\bowtie$$(P$$\bowtie$$(H$$\bowtie$$L)))$.
The hash-join operator iterates through tuples from the
probe relation, finds matching tuples from the build relation,
and joins them to form answer tuples.

\subsection{Best-first Exploration of Query Lattice}\label{sec:bestfirst}

Given a query lattice, a brute-force approach is to evaluate all lattice
nodes (query graphs) to find all answer tuples.  Its exhaustive nature
leads to clear inefficiency, since we only seek \topk\ answers.
Moreover, the potentially many queries are evaluated separately, without
sharing of computation.  Suppose query graph $Q$ is evaluated by the
aforementioned hash-join between the build relation for $e$ and the
probe relation for $Q'$.  By definition, $Q'$ is also a query graph in
the lattice, if $Q'$ is weakly connected and contains all query entities.
In other words, in processing $Q$, we would have processed one of its
children query graph $Q'$ in the lattice.

We propose Alg.\ref{alg:hierarchical}, which allows sharing of computation.
It explores the query lattice in a \emph{bottom-up} way, starting with the
minimal query trees, i.e., the bottom nodes.  After a query graph is processed,
its answers are materialized in files.
To process a query $Q$, at least one of its children $Q'$=$Q$$-$$e$ must have been
processed.  The materialized results for $Q'$ form the probe relation and
a hash table on $e$ is the build relation.

While any topological order would work for the bottom-up exploration,
Alg.\ref{alg:hierarchical} employs a {\em best-first} strategy
that always chooses to evaluate the most promising lattice node
$Q_{best}$ from a set of candidate nodes.  The gist is to process the
lattice nodes in the order of their upper-bound scores and $Q_{best}$
is the candidate with the highest upper-bound score (Line~\ref{ln:qbest}).
If processing $Q_{best}$ does not yield any answer graph, $Q_{best}$
and all its ancestors are pruned (Line~\ref{ln:prune}) and the
upper-bound scores of other candidate nodes are recalculated
(Line~\ref{ln:recalc}).  The algorithm terminates, without fully evaluating
all lattice nodes, when it has obtained
at least \emph{k} answer tuples with scores higher than the highest
possible upper-bound score among all unevaluated nodes
(Line~\ref{ln:terminate}).

For an arbitrary query graph $Q$, its upper-bound score is given by the best
possible score $Q$'s answer graphs can attain.
Deriving such upper-bound score based on ${\textsf{score}_{Q}}(A)$ in Eq.~(\ref{eq:scoreFunction}) leads to loose upper-bound.
${\textsf{score}_{Q}}(A)$ sums up the structure score of $Q$ (${\textsf{s\_score}}(Q)$)
and the content score for $A$ matching $Q$ (${\textsf{c\_score}_{Q}}(A)$).
While ${\textsf{s\_score}}(Q)$ only depends on $Q$ itself, ${\textsf{c\_score}_{Q}}(A)$ captures
the matching nodes in $A$ and $Q$. Without evaluating $Q$ to get $A$,
we can only assume perfect $\textsf{match}(e, e')$ in Eq.~(\ref{eq:scoreFunction}),
which is clearly an over-optimism.
Under such a loose upper-bound, it can be difficult to achieve an early termination of
lattice evaluation.

To alleviate this problem, \system{GQBE} takes a two-stage approach.
Its query algorithm first finds the top-$k'$ answers ($k'$$>$$k$) based on the structure score
${\textsf{s\_score}}(Q)$ only, i.e., the algorithm uses a simplified answer graph
scoring function ${\textsf{score}_{Q}}(A)={\textsf{s\_score}}(Q)$.
In the second stage, \system{GQBE} re-ranks the top-$k'$ answers by the
full scoring function Eq.~(\ref{eq:scoreFunction}) and returns the \topk\ answer tuples based
on the new scores.
Our experiments showed the best accuracy for $k$ ranging from $10$ to $25$
when $k'$ was set to around $100$. Lesser values of $k'$ lowered the accuracy and
higher values increased the running time of the algorithm.  In the ensuing discussion,
we will not further distinct $k'$ and $k$.

Below we provide the algorithm details.

\subsection{Details of the Best-first Exploration Algorithm}\label{sec:alg}
\spara{\emph{(1) Selecting ${\bf {Q}_{best}}$}} 

At any given moment during query lattice evaluation, the lattice nodes belong to
three mutually-exclusive sets---the evaluated, the unevaluated and the pruned.
A subset of the unevaluated nodes, denoted the \emph{lower-frontier} ($\lf$), are
candidates for the node to be evaluated next.
At the beginning, $\lf$ contains only the minimal query trees (Line~\ref{ln:init} of Alg.\ref{alg:hierarchical}).  After a node is
evaluated, all its parents are added to $\lf$ (Line~\ref{ln:insert}).  Therefore,
the nodes in $\lf$ either are minimal query trees or have at least one evaluated child:
\begin{center}
$\lf = \{Q |\ Q\ \text{is not pruned}, \textsf{Children}(Q)$$=$$\emptyset \text{ or }$\\
$(\exists Q' \in \textsf{Children}(Q)\ \text{s.t.}\ Q'\ \text{is evaluated})\}$.
\end{center}

To choose $Q_{best}$ from $\lf$, the algorithm exploits two important properties,
dictated by the query lattice's structure.
\begin{property}
\label{prop:latticeAnswerGraphs}
If $Q_1 \prec Q_2$, then $\forall A_2 \in \mathcal{A}_{Q_2}$, $\exists A_1 \in \mathcal{A}_{Q_1}$ s.t. $A_1 \prec A_2$ and $t_{A_1}$=$t_{A_2}$.
\begin{proof}
If there exists an answer graph $A_2$ for a query graph $Q_2$, and there exists another query graph $Q_1$ that is a subgraph of $Q_2$, then there is a subgraph of $A_2$ that corresponds to $Q_1$. By Definition~\ref{def:ansGraph}, that corresponding subgraph of $A_2$ is an answer graph to $Q_1$. Since the two answer graphs share a subsumption relationship, the projections of the two yield the same answer tuple.
\end{proof}
\end{property}

Property~\ref{prop:latticeAnswerGraphs} says, if an answer tuple
$t_{A_2}$ is projected from answer graph $A_2$ to lattice node $Q_2$,
then every descendent of $Q_2$ must have at least one answer graph
subsumed by $A_2$ that projects to the same answer tuple.
Putting it in an informal way, an answer tuple
(graph) to a lattice node can always be ``grown'' from its descendant nodes
and thus ultimately from the minimal query trees.

\begin{property}
\label{prop:scoringLatticeTuple}
If $Q_1$$\prec$$Q_2$, then $\textsf{s\_score}(Q_1)$$<$$\textsf{s\_score}(Q_2)$.
\begin{proof}
If $Q_1 \prec Q_2$, then $Q_2$ contains all edges in $Q_1$ and at least one more. Thus the property holds by the definition of $\textsf{s\_score}(Q)$ in Eq.~(\ref{eq:scoreFunction}).
\end{proof}
\end{property}

Property~\ref{prop:scoringLatticeTuple} says that, if a lattice node $Q_2$ is an
ancestor of $Q_1$, $Q_2$ has a higher structure score.  This can be directly
proved by referring to the definition of  $\textsf{s\_score}(Q)$ in Eq.~(\ref{eq:scoreFunction}).

For each unevaluated candidate node $Q$ in $\lf$, we define an {\em upper-bound score},
which is the best score $Q$'s answer tuples can possibly attain.
The chosen node, $Q_{best}$, must have the highest upper-bound score among all the nodes in $\lf$.
By the two properties, if evaluating $Q$ returns an answer graph $A$,
$A$ has the potential to grow into an answer graph $A'$ to an ancestor node
$Q'$, i.e., $Q$$\prec$$Q'$ and $A$$\prec$$A'$.  In such a case,
$A$ and $A'$ are projected to the same answer tuple $t_{A}$=$t_{A'}$.
The answer tuple always gets the better score from $A'$, under the
simplified answer scoring function
${\textsf{score}_{Q}}(A)={\textsf{s\_score}}(Q)$, which Alg.\ref{alg:hierarchical}
adopts as mentioned in Sec.~\ref{sec:bestfirst}.
Hence, $Q$'s upper-bound score depends on its \emph{upper boundary}---
$Q$'s unpruned ancestors that have no unpruned parents.
\begin{definition}
\label{def:ub}
The \textbf{\em upper boundary} of a node $Q$ in $\lf$, denoted $\ub(Q)$,
consists of nodes $Q'$ in the \textbf{\em upper-frontier} ($\uf$) that
subsume or equal to $Q$:
\begin{align}
\ub(Q) = \{Q' |\ Q' \supgraph Q, Q'\in \uf \}, \text{where} \nonumber 
\end{align}
$\uf$ is the set of unpruned nodes without unpruned parents:
$\uf$$=$$\{Q |\ Q\ \text{is not pruned}, \nexists Q'\succ Q\ \text{s.t.}\ Q'\ \text{is not pruned}\}$.
\end{definition}

\begin{definition}
\label{def:ubs}
The \textbf{\em upper-bound score} of a node $Q$ is the maximum score
of any query graph in its upper boundary:
\begin{align}
\label{eq:ub}
\ubs(Q) = \operatorname*{max}_{Q'\in \ub(Q)}\textsf{s\_score}(Q')
\end{align}
\end{definition}

\begin{algorithm}[t]
\label{alg:hierarchical}
\caption{\textsf{Best-first Exploration of Query Lattice}}
\LinesNumbered
\footnotesize

\SetKw{KwAnd}{and}
\SetKw{KwDownTo}{downto}

\KwIn{query lattice $\lattice$, query tuple $t$, and an integer $k$}

\KwOut{top-$k$ answer tuples}

\BlankLine
lower frontier $\lf$ $\leftarrow$ leaf nodes of $\lattice$; {\em Terminate} $\leftarrow$ {\bf false}\;\label{ln:init}

\While{{\bf not} Terminate}
{
    $Q_{best} \leftarrow$ node with the highest upper-bound score in $\lf$\;\label{ln:qbest}
    $\mathcal{A}_{Q_{best}} \leftarrow$ evaluate $Q_{best}$; (Sec.\ref{sec:onequery})\\
    \uIf{$\mathcal{A}_{Q_{best}}$=$\emptyset$}
    {
        prune $Q_{best}$ and all its ancestors from $\lattice$\;\label{ln:prune}
        recompute upper-bound scores of nodes in $\lf$ (Alg.~\ref{alg:recompute})\;\label{ln:recalc}
    }
    \Else{insert $\textsf{Parents}(Q_{best})$ into $\lf$\;\label{ln:insert}}
    \lIf{{\em top-$k$ answer tuples found [Theorem~\ref{th:terminate}]}}
    {
        {\em Terminate}$\leftarrow${\bf true}\label{ln:terminate}
    }
}
\end{algorithm}

\begin{figure}[tb!]
\centering
  \includegraphics[width = 1.05\linewidth, keepaspectratio = true, scale=0.4]{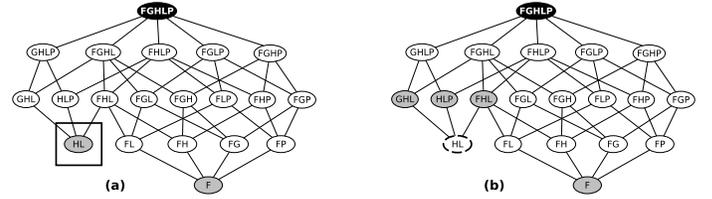}
\caption{Evaluating Lattice in Figure~\ref{fig:lattice}(b)}
\label{fig:lattice-traversal}
\end{figure}

\begin{example}[Lattice Evaluation]
Consider the lattice in Fig.\ref{fig:lattice-traversal}(a) where the lightly shaded nodes belong to the $\lf$ and the darkly shaded node belongs to $\uf$. At the beginning, only the minimal query trees belong to the $\lf$ and the maximal query graph belongs to the $\uf$. If \emph{HL} is chosen as $Q_{best}$ and evaluating it results in matching answer graphs, all its parents (\emph{GHL}, \emph{HLP} and \emph{FHL}) are added to $\lf$ as shown in Fig.\ref{fig:lattice-traversal}(b). The evaluated node \emph{HL} is represented in bold dashed node.
\end{example}

\spara{\emph{(2) Pruning and Lattice Recomputation}} 

A lattice node that does not have any answer graph is referred to as a {\em null node}.
If the most promising node $Q_{best}$ turns out to be a null node after evaluation, all its
ancestors are also null nodes based on Property~\ref{prop:up_close} below which follows
directly from Property~\ref{prop:latticeAnswerGraphs}. 
\begin{property}[Upward Closure]
\label{prop:up_close}
If $\mathcal{A}_{Q_1}=\emptyset$, then $\forall Q_2 \succ Q_1$, $\mathcal{A}_{Q_2}=\emptyset$.
\begin{proof}
Suppose there is a query node $Q_2$ such that $Q_1 \prec Q_2$ and $\mathcal{A}_{Q_1}=\emptyset$, while $\mathcal{A}_{Q_2} \neq \emptyset$. By Property~\ref{prop:latticeAnswerGraphs}, for every answer graph $A$ in $\mathcal{A}_{Q_2}$, there must exist a subgraph of $A$ that belongs to $\mathcal{A}_{Q_1}$. This is a contradiction and completes the proof.
\end{proof}
\end{property}

\begin{algorithm}[t]
\label{alg:recompute}
\caption{\textsf{Recomputing Upper-bound Scores}}
\LinesNumbered
\footnotesize

\SetKw{KwAnd}{and}
\SetKw{KwDownTo}{downto}

\KwIn{query lattice $\lattice$, null node $Q_{best}$, and lower-frontier $\lf$}

\KwOut{$U(Q)$ for all $Q$ in $\lf$}

\BlankLine
\ForEach{$Q \in \lf$}
{
	$\nb \leftarrow \phi$; // set of new upper boundary candidates of $Q$.\\
    \ForEach{$Q' \in \ub(Q) \cap \ub(Q_{best})$}
    {
        $\ub(Q) \leftarrow \ub(Q)\setminus\{Q'\}$;\\
        $\uf \leftarrow \uf\setminus\{Q'\}$;\\
  		$V(Q'') \leftarrow V(Q')$;\\
        \ForEach{$e \in E(Q_{best})\setminus E(Q)$}
        {
    			$E(Q'') \leftarrow E(Q') \setminus \{e\}$;\\
            find $Q_{sub}$, the weakly-connected component of $Q''$, containing all query entities;\\
            $\nb \leftarrow \nb \cup \{Q_{sub}\}$;\\
        }
    }
    \ForEach{$Q_{sub} \in \nb$}
    {
    		\If{$Q_{sub} \nprec $ (\text{any node in $\uf$ or $\nb$})}
    		{
    			$\ub(Q) \leftarrow \ub(Q) \cup\{Q_{sub}\}$, $\uf \leftarrow \uf\cup\{Q_{sub}\}$;
    		}
    }
    recompute $U(Q)$ using Eq.~(\ref{eq:ub});
}
\end{algorithm}


\begin{figure}[tb!]
\centering
  \includegraphics[width = 1.05\linewidth, keepaspectratio = true, scale=0.4]{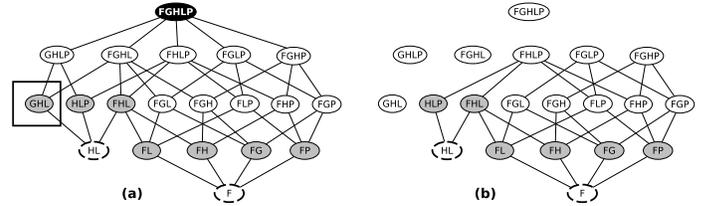}
\caption{Recomputing Upper Boundary of Dirty Node $FG$}
\label{fig:recompute-lattice}
\end{figure}

Based on Property~\ref{prop:up_close},
when $Q_{best}$ is evaluated to be a null node, Alg.\ref{alg:hierarchical} prunes
$Q_{best}$ and its ancestors, which changes the upper-frontier $\uf$.
It is worth noting that $Q_{best}$ itself may be an upper-frontier
node, in which case only $Q_{best}$ is pruned. In general, due to the
evaluation and pruning of nodes, $\lf$ and $\uf$ might overlap.
For nodes in $\lf$ that have at least one upper boundary
node among the pruned ones, the change of $\uf$ leads to changes
in their upper boundaries and, sometimes, their upper-bound scores too.
We refer to such nodes as {\em dirty nodes}. The rest of this section
presents an efficient method (Alg.~\ref{alg:recompute}) to recompute
the upper boundaries, and if changed, the upper-bound scores of the
dirty nodes.

Consider all the pairs $\langle Q, Q' \rangle$ such that
$Q$ is a dirty node in $\lf$, and $Q'$ is one of its pruned upper boundary nodes.
Three necessary conditions for a new candidate upper boundary node of $Q$ are that it is
(1) a supergraph of $Q$, (2) a subgraph of $Q'$ and (3) not a supergraph of $Q_{best}$.
The subsumption relationships among these graphs can be visualized in a Venn diagram, as shown in Fig.\ref{fig:venn}.
If there are $q$ edges in $Q_{best}$ but not in $Q$ (the non-intersecting region of $Q_{best}$ in Fig.\ref{fig:venn}),
we create a set of $q$ distinct graphs $Q''$. Each $Q''$ contains all edges in $Q'$ except
exactly one of the aforementioned $q$ edges (Line 8 in Alg.~\ref{alg:recompute}).
For each $Q''$, we find $Q_{sub}$ which is the weakly connected component of $Q''$ containing all the query entities (Lines 9-10).
\begin{figure}[t]
\centering
  \includegraphics[width = 0.5\linewidth, keepaspectratio = true, scale=0.65]{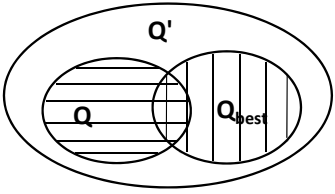}
 \caption{Venn Diagram of Edges}
\label{fig:venn}
\end{figure}
Lemma~\ref{lemma:qsub_notnull} and \ref{lemma:exist_qsub} show that $Q_{sub}$ must be one of the unevaluated nodes after pruning
the ancestor nodes of $Q_{best}$ from $\lattice$.

\begin{lemma}
\label{lemma:qsub_notnull}
$Q_{sub}$ is a {\em query graph} and it does not belong to the pruned nodes of lattice $\lattice$.
\end{lemma}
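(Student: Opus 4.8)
The plan is to leverage the very construction of $Q_{sub}$ in Alg.~\ref{alg:recompute}: it is obtained by deleting a \emph{single} edge from an already-known lattice node $Q'$ and then extracting a connected component. This tightly pins down both where $Q_{sub}$ sits in the subsumption order and exactly which edge it is missing, and those two facts are all that the two parts of the lemma need. First I would fix the notation of the algorithm: $Q$ is a dirty node in $\lf$, $Q'\in\ub(Q)\cap\ub(Q_{best})$ is one of its pruned upper-boundary nodes, so $Q'\supgraph Q$ and $Q'\supgraph Q_{best}$, and $Q''$ has $V(Q'')=V(Q')$ and $E(Q'')=E(Q')\setminus\{e\}$ for a chosen edge $e\in E(Q_{best})\setminus E(Q)$, with $Q_{sub}$ the weakly connected component of $Q''$ containing all query entities. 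I would also record that $Q'$, being in $\uf$, is a lattice node and hence $Q'\preceq MQG_t$.

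For the first assertion, that $Q_{sub}$ is a query graph, I would simply check the defining conditions of a node of $\mathcal{QG}_t$ (Def.~\ref{def:lattice}, via Def.~\ref{def:qgraph}). Weak connectivity and containment of all query entities hold immediately, since $Q_{sub}$ is \emph{defined} to be the component of $Q''$ that contains all query entities. Being a subgraph of $MQG_t$ follows by transitivity of the subsumption order: $Q_{sub}\preceq Q''\preceq Q'\preceq MQG_t$. Hence $Q_{sub}$ is a weakly connected subgraph of $MQG_t$ containing every query entity, i.e., a query graph in $\mathcal{QG}_t$.

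For the second assertion, that $Q_{sub}$ is not pruned, I would split the pruned set into what was pruned before the current step and what the current step adds. Just before this step $Q'\in\uf$ and is therefore unpruned; moreover the pruned set is always upward closed, because each pruning event (Line~\ref{ln:prune}, justified by Property~\ref{prop:up_close}) removes a node together with all its ancestors, an up-set, and a union of up-sets is an up-set. Since $Q_{sub}\preceq Q'$, had $Q_{sub}$ been pruned earlier then $Q'$ would have been pruned earlier too, contradicting $Q'\in\uf$; so $Q_{sub}$ was not pruned before. The current step adds only $\{X\mid X\supgraph Q_{best}\}$, and here the single deleted edge does the work: $e\in E(Q_{best})$ but $e\notin E(Q'')\supseteq E(Q_{sub})$, so $Q_{best}\not\preceq Q_{sub}$, i.e. $Q_{sub}\not\supgraph Q_{best}$ and $Q_{sub}\neq Q_{best}$. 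Thus $Q_{sub}$ is not added in this step either, and so it belongs to no pruned node.

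I expect the main obstacle to be the ``not pruned'' part, since it is the only place that must reason about the global bookkeeping of the run rather than a single local construction: one has to combine the upward closure of pruning with the status of the boundary node $Q'$ to rule out \emph{earlier} pruning, and separately use the missing edge $e$ to rule out pruning in the \emph{current} step. The ``query graph'' part, by contrast, is a routine transitivity argument, and I would keep it short so the emphasis falls on isolating edge $e$ as the certificate that $Q_{sub}$ escapes the current up-set of $Q_{best}$.
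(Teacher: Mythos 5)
Your proposal is correct and follows essentially the same route as the paper's proof: part one verifies the query-graph conditions directly from the construction of $Q_{sub}$, and part two rules out the two possible sources of pruning — earlier pruning (excluded because pruning is upward closed and $Q_{sub}\preceq Q'$ with $Q'$ still in the upper frontier) and the current step (excluded because the deleted edge $e\in E(Q_{best})\setminus E(Q_{sub})$ certifies $Q_{best}\not\preceq Q_{sub}$). Your version is somewhat more explicit than the paper's (e.g., checking $Q_{sub}\preceq MQG_t$ by transitivity and stating the up-set closure of the pruned set), but the underlying argument is the same.
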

\begin{proof}
$Q_{sub}$ is a query graph because it is weakly connected and it contains all the input entities.
Suppose $Q_{sub}$ is a newly generated candidate upper boundary node from pair $\langle Q,Q' \rangle$ and $Q_{sub}$ belongs to the pruned nodes of lattice $\lattice$. This can happen only due to one of
the two reasons: 1) it is a supergraph of the current null node $Q_{best}$ or 2) it is an already pruned
node. The former cannot happen since the construction mechanism of $Q_{sub}$ proposed ensures that it is not a supergraph
of $Q_{best}$. the latter implies that $Q_{sub}$ was the supergraph of an previously evaluated null node (or $Q_{sub}$ itself was a null node). In this case, since
$Q_{sub} \prec Q'$, $Q'$ would also have been pruned and thus could not have been part of the upper-boundary. Hence $\langle Q,Q' \rangle$ cannot be a valid pair for recomputing the upper boundary if $Q_{sub}$ is a pruned node. This completes the proof.
\end{proof}
\begin{lemma}
\label{lemma:exist_qsub}
$Q \preceq Q_{sub}$.
\end{lemma}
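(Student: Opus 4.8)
The plan is to show that $Q$ survives intact when the single edge $e$ is deleted from $Q'$, and therefore lands entirely inside the weakly connected component $Q_{sub}$. First I would recall what the construction of $Q_{sub}$ hands us. The pair $\langle Q, Q'\rangle$ is formed only when $Q' \in \ub(Q)$, so the definition of the upper boundary (Def.~\ref{def:ub}) gives $Q \preceq Q'$, and hence $V(Q) \subseteq V(Q')$ and $E(Q) \subseteq E(Q')$. The deleted edge is chosen from $E(Q_{best}) \setminus E(Q)$, so in particular $e \notin E(Q)$.

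The key step is to observe that removing $e$ cannot touch any edge of $Q$. Since $Q''$ is built with $V(Q'') = V(Q')$ and $E(Q'') = E(Q') \setminus \{e\}$, and since $e \notin E(Q)$, the inclusions $V(Q) \subseteq V(Q'')$ and $E(Q) \subseteq E(Q'')$ both persist. Thus $Q$ is a subgraph of $Q''$, and---crucially---$Q$ remains weakly connected inside $Q''$, because none of its own edges was among those removed.

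Finally I would invoke the definition of $Q_{sub}$ as the weakly connected component of $Q''$ that contains all query entities. Because $Q$ is a weakly connected subgraph of $Q''$ that itself contains every query entity (it is a query graph), $Q$ cannot straddle two distinct components of $Q''$; it lies wholly within a single component, and that component must be the one holding the query entities, namely $Q_{sub}$. Consequently $V(Q) \subseteq V(Q_{sub})$ and $E(Q) \subseteq E(Q_{sub})$, i.e. $Q \preceq Q_{sub}$, as claimed.

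The main obstacle here is definitional rather than computational: one must argue cleanly that $Q$ settles into the specific component containing the query entities instead of being fragmented across several components. This is resolved entirely by the fact that $e \notin E(Q)$, which leaves $Q$'s internal connectivity untouched by the deletion; the only substantive point is then pinning down that a weakly connected subgraph containing all query entities is forced to sit inside the unique component $Q_{sub}$.
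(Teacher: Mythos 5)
Your proposal is correct and follows essentially the same argument as the paper: the deleted edge lies in $E(Q_{best})\setminus E(Q)$ and hence not in $Q$, so $Q$ remains an intact weakly connected subgraph of $Q''$, and since $Q$ contains all query entities it must sit inside the component $Q_{sub}$. You merely spell out the intermediate inclusions and the no-straddling-of-components point in more detail than the paper does.
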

\begin{proof}
Based on Alg.~\ref{alg:recompute} described above, $Q''$ is the result of deleting one edge from $Q'$ and that edge does not belong to $Q$.
Therefore, $Q$ is subsumed by $Q''$.  By the same algorithm,  $Q_{sub}$ is the weakly connected component of $Q''$ that contains all the query entities. 
Since $Q$ already is weakly connected and contains all the query entities, $Q_{sub}$ must be a supergraph of $Q$. 
\end{proof}
If $Q_{sub}$ (a candidate new upper boundary node of $Q$) is not subsumed by any node in the upper-froniter or other candidate nodes, we add $Q_{sub}$ to $\ub$($Q$) and $\uf$ (Lines 11-13).
Finally, we recompute $Q$'s upper-bound score (Line 14). 
Theorem~\ref{th:correctness} justifies the correctness of the above procedure.
\begin{theorem}
\label{th:correctness}
If $Q_{best}$ is evaluated to be a null node, then Alg.\ref{alg:recompute} identifies all new upper boundary nodes for every dirty node $Q$.
\end{theorem}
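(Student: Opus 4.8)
The plan is to prove the completeness claim: for each dirty node $Q$, every node that genuinely enters its upper boundary is produced as some $Q_{sub}$ in Alg.~\ref{alg:recompute} and kept by the filter in Lines~11--13. I would first fix vocabulary by contrasting the \emph{old} lattice state (just before $Q_{best}$ is found null) with the \emph{new} state (after $Q_{best}$ and its ancestors are pruned). Exactly the nodes of $\ub(Q_{best})$ leave $\uf$, since the old-$\uf$ nodes that get pruned are precisely those $\supgraph Q_{best}$; hence $Q$ is dirty iff $\ub(Q)\cap\ub(Q_{best})\neq\emptyset$, matching the outer loop. The surviving boundary nodes $\ub(Q)\setminus\ub(Q_{best})$ need no work, so a \emph{new upper boundary node} is a node $R\supgraph Q$ that lies in the new $\uf$ but not the old $\uf$; these are the $R$ I must account for.

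The crucial observation I would establish first is that the unpruned nodes form a \emph{downward-closed} set: since pruning a null node deletes all its ancestors, no pruned node can lie below an unpruned one. It follows that if $R$ is in the new $\uf$, then every proper ancestor of $R$ is pruned --- otherwise an unpruned ancestor would, by downward closure along a covering chain, yield an unpruned parent of $R$ and contradict $R\in\uf$. Next I would locate the generating edge and boundary node. Because $R\notin$ old $\uf$, it had a parent $P$ that was unpruned in the old state; because $R\in$ new $\uf$, $P$ is now pruned, hence newly pruned, hence $P\supgraph Q_{best}$. Writing $E(P)=E(R)\cup\{e^{*}\}$ for the covering edge $e^{*}$, the containment $E(Q_{best})\subseteq E(P)$ together with $R\not\supgraph Q_{best}$ forces $e^{*}\in E(Q_{best})$, while $E(Q)\subseteq E(R)$ gives $e^{*}\notin E(Q)$; thus $e^{*}\in E(Q_{best})\setminus E(Q)$, an edge the inner loop ranges over. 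Walking up from $P$ through unpruned parents reaches an old-$\uf$ node $Q'\supgraph P$, which satisfies $Q'\supgraph Q_{best}$ and $Q'\supgraph Q$ and is pruned in the new state, so $Q'\in\ub(Q)\cap\ub(Q_{best})$ --- a node the outer loop visits.

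I would then close the argument on the pair $\langle Q',e^{*}\rangle$. Since $E(R)=E(P)\setminus\{e^{*}\}\subseteq E(Q')\setminus\{e^{*}\}=E(Q'')$ and $R$ is weakly connected and contains all query entities, $R$ lies in the query-entity component, i.e.\ $R\subgraph Q_{sub}$. By Lemma~\ref{lemma:qsub_notnull} the generated $Q_{sub}$ is unpruned, whereas the structural observation says every proper ancestor of $R$ is pruned; a node that is at once $\supgraph R$ and unpruned can only be $R$, so $Q_{sub}=R$ and $R\in\nb$. Finally $R$ passes the filter: were $R\prec X$ for some $X\in\uf\cup\nb$, then $X$ would be an unpruned proper ancestor of $R$ (by the definition of $\uf$, or by Lemma~\ref{lemma:qsub_notnull} for $\nb$), again contradicting the structural observation; hence $R$ is added to $\ub(Q)$.

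The main obstacle is the pinning step $Q_{sub}=R$. It is tempting to try to engineer $Q'$ so that $Q'-e^{*}$ is already connected and literally equals $R$, but a generic $Q'$ carries extra edges and $Q_{sub}$ is a priori only a supergraph of $R$. The clean resolution is not combinatorial control over $Q'$ but the interplay of Lemma~\ref{lemma:qsub_notnull} (unprunedness of $Q_{sub}$) with downward closure, which jointly leave $R$ as the only candidate. I expect the matching soundness direction --- that every node the algorithm adds is a legitimate new boundary node --- to follow symmetrically from Lemmas~\ref{lemma:qsub_notnull} and~\ref{lemma:exist_qsub} together with the same filter analysis.
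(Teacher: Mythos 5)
Your proof is correct, and its skeleton is the same as the paper's: split the old upper boundary of a dirty node $Q$ into nodes that survive the pruning and nodes that are pruned (i.e., $\ub(Q)\cap\ub(Q_{best})$), argue that any genuinely new boundary node must arise from such a pruned node $Q'$ together with an edge in $E(Q_{best})\setminus E(Q)$, and invoke the weakly connected component construction of Alg.~\ref{alg:recompute}. The difference is one of rigor rather than route. The paper's proof ends with ``the algorithm enumerates all possible children of $Q'$ \ldots thus all new upper boundary nodes of $Q$ are identified,'' which silently assumes that the component $Q_{sub}$ built by the algorithm actually \emph{equals} the new boundary node $R$, rather than merely containing it. You correctly flag this pinning step as the main obstacle and close it with the right tools: the unpruned nodes form a downward-closed set (each pruning operation removes an upward-closed set), so every proper ancestor of a node in the new $\uf$ is pruned; combining this with Lemma~\ref{lemma:qsub_notnull}'s guarantee that $Q_{sub}$ is unpruned forces $Q_{sub}=R$. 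Along the way you also make explicit two facts the paper leaves implicit --- that every new boundary node sits below some \emph{pruned} old boundary node (via the covering parent $P$ and the chain walk up to an old $\uf$ node), and that the edge $e^{*}$ separating $R$ from $P$ necessarily lies in $E(Q_{best})\setminus E(Q)$ --- and you verify that the subsumption filter in Lines 11--13 cannot discard $R$, a step the paper's proof omits entirely. So what your write-up buys is a complete justification of the paper's final ``Thus''; nothing in it deviates from, or conflicts with, the paper's intended argument.
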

\begin{proof}
For any dirty node $Q$, its original upper boundary $\ub(Q)$ consists of two sets of nodes: (1) nodes that are not supergraphs of $Q_{best}$ and thus remain in the lattice, (2) nodes that are supergraphs of $Q_{best}$ and thus are pruned. By the definition of upper boundary node, no upper boundary node of $Q$ can be a subgraph of any node in set (1). So any new upper boundary node of $Q$ must be a subgraph of a node $Q'$ in set (2). For every pruned upper boundary node $Q'$ in set (2), the algorithm enumerates all (specifically $q$) possible children of $Q'$ that are not supergraphs of $Q_{best}$ but are supergraphs of $Q$.  For each enumerated graph $Q''$, the algorithm finds $Q_{sub}$, which is the weakly connected component of $Q''$ containing all query entities.  Thus all new upper boundary nodes of $Q$ are identified.
\end{proof}

\begin{example}[Recomputing Upper Boundary]
Consider the lattice in Fig.\ref{fig:recompute-lattice}(a) where nodes \emph{HL} and \emph{F} are the evaluated nodes and the lightly shaded nodes belong to the new $\lf$. If node \emph{GHL} is the currently evaluated null node $Q_{best}$ and \emph{FGHLP} is $Q'$, let \emph{FG} be the dirty node $Q$ whose upper boundary is to be recomputed.
The edges in $Q_{best}$ that are not present in $Q$ are \emph{H} and \emph{L}.
A new upper boundary node $Q''$ contains all edges in $Q'$ excepting exactly either \emph{H} or \emph{L}. This leads to two new upper boundary nodes, \emph{FGHP} and \emph{FGLP}, by removing \emph{L} and \emph{H} from \emph{FGHLP}, respectively.
Since \emph{FGHP} and \emph{FGLP} do not subsume each other and are not subgraphs of any other upper-frontier node, they are now part of $\ub$($Q$) and the new $\uf$.
Fig.\ref{fig:recompute-lattice}(b) shows the modified lattice where the pruned nodes are disconnected. \emph{FHLP} is another node in $\uf$ that is discovered
using dirty nodes such as \emph{FL} and \emph{HLP}.
\end{example}

\spara{Complexity Analysis of Alg.\ref{alg:recompute}}\hspace{2mm}
The query graphs corresponding to lattice nodes are represented using bit vectors since we exactly know the edges involved in all the query graphs. The bit corresponding to an edge is set if its present in the query graph. Identifying the dirty nodes, null upper boundary nodes and building a new potential upper boundary node using a pair of nodes $\langle Q, Q' \rangle$, can be accomplished using bit operations and each step incurs $O(\lvert{E(MQG_t)}\lvert)$ time. Finding the weakly connected component of a potential upper boundary using DFS takes $O(\lvert{E(Q')}\lvert)$ time. If $\lattice_n$ is the set of all null nodes encountered in the lattice and there are $D_p$ such pairs for every null node and $q$ is the average number of potential new upper boundary nodes created per pair, the worst case time complexity of recomputing the upper-frontier is $O(\lvert \lattice_n \lvert \times D_p \times q \times \lvert{E(MQG_t)}\lvert)$. Our experimental results show low average values of $\lvert \lattice_n \lvert$, $D_p$ and $q$ with $\lvert \lattice_n \lvert$ being only 1\% of $\lvert \lattice \lvert$, $D_p$ around 8 and $q$ around 9. In practice, our upper-frontier recomputation algorithm quickly computes the dynamically changing lattice.

\spara{\emph{(3) Termination}} 

After $Q_{best}$ is evaluated, its answer tuples are $\{t_A | A$$\in$$\mathcal{A}_{Q_{best}}\}$.
For a $t_A$ projected from answer graph $A$, the score assigned by
$Q_{best}$ to $A$ (and thus $t_A$) is $\textsf{s\_score}(Q_{best})$, based on
${\textsf{score}_{Q}}(A)$$=$${\textsf{s\_score}}(Q)$---the simplified scoring function adopted by Alg.\ref{alg:hierarchical}.
If $t_A$ was also projected from already evaluated nodes, it has a current score.
By Def.\ref{def:scoringAnsTuple}, the final score of $t_A$ will be from its
best answer graph.  Hence, if $\textsf{s\_score}(Q_{best})$ is higher than
its current score, then its score is updated.
In this way, all found answer tuples so far are kept and their current scores
are maintained to be the highest scores they have received.
The algorithm terminates when the current score of the $k^{th}$ best answer tuple so far
is greater than the upper-bound score of the next $Q_{best}$ chosen by the algorithm,
by Theorem~\ref{th:terminate}.

\begin{theorem}\label{th:terminate}
If the score of the current $k^{th}$ best answer tuple is greater than $\ubs(Q_{best})$, then terminating the lattice evaluation guarantees that the current
top-$k$ answer tuples have scores higher than $\textsf{s\_score}(Q)$ for any unevaluated
query graph $Q$.
\end{theorem}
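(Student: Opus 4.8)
The plan is to reduce the termination condition to a single inequality: for every unevaluated query graph $Q$, $\textsf{s\_score}(Q) \le \ubs(Q_{best})$. Once this is established, the theorem follows at once. The hypothesis says the current $k^{th}$ best answer tuple scores strictly above $\ubs(Q_{best})$, so it scores strictly above $\textsf{s\_score}(Q)$ for every unevaluated $Q$; and because Alg.~\ref{alg:hierarchical} uses the simplified scoring $\textsf{score}_Q(A)=\textsf{s\_score}(Q)$, Def.~\ref{def:scoringAnsTuple} guarantees that any answer tuple obtainable by later evaluating an unevaluated $Q$ can receive a contribution of at most $\textsf{s\_score}(Q)$. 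Hence no unevaluated node can raise a tuple above the current $k^{th}$ best, so the current top-$k$ is final. This is exactly the stated conclusion.

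First I would give a clean characterization of $\ubs(Q_{best})$. Since $Q_{best}$ maximizes the upper-bound score over $\lf$ (Line~\ref{ln:qbest}), and by Defs.~\ref{def:ub}--\ref{def:ubs} we have $\ubs(Q)=\max_{Q'\in\ub(Q)}\textsf{s\_score}(Q')$ with $\ub(Q)=\{Q'\in\uf : Q'\supgraph Q\}$, it follows that $\ubs(Q_{best})=\max_{Q\in\lf}\,\max_{Q'\in\uf,\,Q'\supgraph Q}\textsf{s\_score}(Q')$. I would then argue that the set of $\uf$ nodes that appear in some $\ub(Q)$ is in fact all of $\uf$, which collapses this to the identity $\ubs(Q_{best})=\max_{Q'\in\uf}\textsf{s\_score}(Q')$. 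The content behind this identity is a lemma: every node in $\uf$ has a descendant in $\lf$.

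To prove that lemma I would show, more generally, that every unevaluated (hence unpruned) node has a descendant in $\lf$, by induction on edge count. A leaf is a minimal query tree and, being unpruned and unevaluated, lies in $\lf$ by definition. A non-leaf $Q^*$ possessing an evaluated child is in $\lf$ for the same reason. The only remaining case is a non-leaf $Q^*$ with no evaluated child, where I would produce an unpruned child and apply induction. Here the pruning rule (Line~\ref{ln:prune}) is essential: it prunes a null node together with all its ancestors, so a pruned node is either itself null or an ancestor of a null node, and in either case every supergraph of it is pruned as well. Thus if all children of $Q^*$ were pruned, then $Q^*$---a supergraph of each child---would also be pruned, contradicting that $Q^*$ is unevaluated. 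This is essentially the upward-closure phenomenon of Property~\ref{prop:up_close}. Applying the lemma to each $Q'\in\uf$ (which is unevaluated, since an evaluated node is either pruned when null or has its parents inserted into $\lf$ as unpruned parents, so it would not be in $\uf$) yields a descendant $Q\in\lf$ with $Q\subgraph Q'$, whence $Q'\in\ub(Q)$; this gives the identity $\ubs(Q_{best})=\max_{Q'\in\uf}\textsf{s\_score}(Q')$.

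Finally I would close the argument. Any unevaluated query graph $Q$ is unpruned, so following its chain of unpruned parents up to a node with no unpruned parent yields an ancestor $Q'\in\uf$ with $Q\subgraph Q'$; by the structure-score monotonicity of Property~\ref{prop:scoringLatticeTuple}, $\textsf{s\_score}(Q)\le\textsf{s\_score}(Q')\le\max_{Q'\in\uf}\textsf{s\_score}(Q')=\ubs(Q_{best})$. Combined with the hypothesis, the $k^{th}$ best current tuple outscores $\textsf{s\_score}(Q)$ for every unevaluated $Q$, which is the claim. I expect the main obstacle to be the ``no node can have all of its children pruned without being pruned itself'' lemma: it is the one place where the global pruning behaviour of the algorithm---rather than purely local lattice structure---is invoked, and it is precisely what certifies that $\uf$ caps the entire unpruned region, so that $\ubs(Q_{best})$ dominates the structure score of every unevaluated node.
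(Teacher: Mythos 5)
Your overall strategy is sound, and in fact it is more rigorous than the paper's own proof, which argues by contradiction and simply \emph{asserts} the key step (that an offending unevaluated $Q$ would force some lower-frontier node's upper-bound score to be at least $\textsf{s\_score}(Q)$); your lemma that every unevaluated node has a descendant in $\lf$, proved by induction via the upward closure of pruning, is precisely the missing justification for that step. However, there is a genuine gap in your route: the claim that every node of $\uf$ is unevaluated, and hence the identity $\ubs(Q_{best})=\max_{Q'\in\uf}\textsf{s\_score}(Q')$. Your justification---that an evaluated non-null node has its parents inserted into $\lf$ ``as unpruned parents''---overlooks that those parents can be pruned \emph{after} insertion, and the identity can actually fail. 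Concretely: let the query entities be $u,v$ and let $MQG_t$ have edge $e_a$ joining $u,v$ plus a path $e_b,e_c$ through a third node $w$. The lattice has leaves $A=\{e_a\}$ and $B=\{e_b,e_c\}$, middle nodes $\{e_a,e_b\},\{e_a,e_c\}$, and root $R$, which is $B$'s only parent. Suppose the algorithm evaluates $B$ (non-null, so $R$ enters $\lf$) and then $R$ (upper-bound ties permit this choice; suppose $R$ is null, so it is pruned). Now $B$ is unpruned with its only parent pruned, so $B\in\uf$ by Def.~\ref{def:ub}; yet $B$ is evaluated and, being a leaf, has no descendant in $\lf=\{A\}$, so it appears in no upper boundary. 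If $\textsf{w}(e_a)$ is small, then $\textsf{s\_score}(B)=\textsf{w}(e_b)+\textsf{w}(e_c)$ strictly exceeds $\ubs(A)=\textsf{w}(e_a)+\max(\textsf{w}(e_b),\textsf{w}(e_c))=\ubs(Q_{best})$, falsifying your identity. (The theorem itself survives because such offending $\uf$ nodes are evaluated, and the theorem quantifies only over unevaluated $Q$.)

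The repair needs only pieces you already have, by replacing the global identity with a per-node pairing. Given an unevaluated $Q$, apply your lemma to $Q$ \emph{itself} to obtain a descendant $Q_d\in\lf$ with $Q_d\subgraph Q$ (possibly $Q_d=Q$), and follow your chain of unpruned parents upward to obtain $Q'\in\uf$ with $Q\subgraph Q'$. Then $Q_d\subgraph Q'$, so $Q'\in\ub(Q_d)$ by Def.~\ref{def:ub}, and therefore $\textsf{s\_score}(Q)\le\textsf{s\_score}(Q')\le\ubs(Q_d)\le\ubs(Q_{best})$, using Property~\ref{prop:scoringLatticeTuple}, Def.~\ref{def:ubs}, and the maximality of $Q_{best}$ over $\lf$ (Line~\ref{ln:qbest} of Alg.\ref{alg:hierarchical}). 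Combined with the hypothesis that the current $k^{th}$ best score exceeds $\ubs(Q_{best})$, this gives the stated conclusion without ever claiming that $\uf$ as a whole is capped by $\ubs(Q_{best})$---a claim that, as shown above, is false in general.
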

\begin{proof}
Suppose, upon termination, there is an unevaluated query graph $Q$ whose $\textsf{s\_score}(Q)$ is greater than the score of the $k^{th}$ answer tuple.
This implies that there exists some node in the lower-frontier whose upper-bound score is at least $\textsf{s\_score}(Q)$ and is thus greater than the score of the $k^{th}$ answer tuple. Since the termination condition precludes this, it is a contradiction to the initial assumption. We thus cannot have any unevaluated query graph whose structure score is greater than the $k^{th}$ answer tuple's score upon termination. 
\end{proof}
\reminder{Arijit: I added back the earlier proof, though I think the first line of the proof is somewhat unclear.
What do i-th and k-th answer tuples mean? -- based on ranking? also the following line is unclear: ``the termination condition
which specifies that the largest \emph{upper bound} in the lattice is lesser than $t'_k$''. -- what is largest upper
bound in the lattice?}

\section{Experiments}\label{sec:exp}
This section presents our experiment results on the accuracy and
efficiency of \system{GQBE}.  The experiments were
conducted on a double quad-core $24$ GB Memory $2.0$ GHz Xeon server.

\begin{table} [t]
\centering
\scriptsize
\begin{tabular}{|p{6mm}|l|l|}
  \hline
  {\bf Query}  &   {\bf Query Tuple} &   {\bf Table Size} \\  \hline\hline
  F$_1$ & $\langle$Donald Knuth, Stanford University, Turing Award$\rangle$ & 18 \\
  F$_2$ & $\langle$Ford Motor, Lincoln, Lincoln MKS$\rangle$ & 25 \\
  F$_3$ & $\langle$Nike, Tiger Woods$\rangle$ & 20 \\
  F$_4$ & $\langle$Michael Phelps, Sportsman of the Year$\rangle$ & 55 \\
  F$_5$ & $\langle$Gautam Buddha, Buddhism$\rangle$ & 621 \\
  F$_6$ & $\langle$Manchester United, Malcolm Glazer$\rangle$ & 40 \\
  F$_7$ & $\langle$Boeing, Boeing C-22$\rangle$ & 89 \\
  F$_8$ & $\langle$David Beckham, A. C. Milan$\rangle$ & 94 \\
  F$_9$ & $\langle$Beijing, 2008 Summer Olympics$\rangle$ & 41 \\
  F$_{10}$ & $\langle$Microsoft, Microsoft Office$\rangle$ & 200 \\
  F$_{11}$ & $\langle$Jack Kirby, Ironman$\rangle$ & 25 \\
  F$_{12}$ & $\langle$Apple Inc, Sequoia Capital$\rangle$ & 300 \\
  F$_{13}$ & $\langle$Beethoven, Symphony No. 5$\rangle$ & 600 \\
  F$_{14}$ & $\langle$Uranium, Uranium-238$\rangle$ & 26 \\
  F$_{15}$ & $\langle$Microsoft Office, C++$\rangle$ & 300 \\
  F$_{16}$ & $\langle$Dennis Ritchie, C$\rangle$ & 163 \\
  F$_{17}$ & $\langle$Steven Spielberg, Minority Report$\rangle$ & 40 \\
  F$_{18}$ & $\langle$Jerry Yang, Yahoo!$\rangle$ & 8349 \\
  F$_{19}$ & $\langle$C$\rangle$ & 1240 \\
  F$_{20}$ & $\langle$TomKat$\rangle$ & 16 \\
  D$_1$ & $\langle$Alan Turing, Computer Scientist$\rangle$ & 52 \\
  D$_2$ & $\langle$David Beckham, Manchester United$\rangle$ & 273 \\
  D$_3$ & $\langle$Microsoft, Microsoft Excel$\rangle$ & 300 \\
  D$_4$ & $\langle$Steven Spielberg, Catch Me If You Can$\rangle$ & 37 \\
  D$_5$ & $\langle$Boeing C-40 Clipper, Boeing$\rangle$ & 118 \\
  D$_6$ & $\langle$Arnold Palmer, Sportsman of the year$\rangle$ & 251 \\
  D$_7$ & $\langle$Manchester City FC, Mansour bin Zayed Al Nahyan$\rangle$ & 40 \\
  D$_8$ & $\langle$Bjarne Stroustrup, C++$\rangle$ & 964 \\
  \hline
\end{tabular}
\vspace{1mm}
\caption{Queries and Ground Truth Table Size}
\label{tab:groundTruth}
\vspace{-2mm}
\end{table}

\vspace{-1mm}
{\flushleft \textbf{Datasets}}\hspace{2mm} The experiments were conducted over two
large real-world knowledge graphs, the Freebase~\cite{Bollacker+08freebase}
and DBpedia~\cite{AuerBK+07} datasets.
We preprocessed the graphs so that the kept nodes are all named
entities (\eg \entity{Stanford University}) and abstract concepts (\eg
\entity{Jewish people}). The resulting Freebase graph contains $28$M nodes,
$47$M edges, and $5,428$ distinct edge labels. The DBpedia graph contains
$759$K nodes, $2.6$M edges and $9,110$ distinct edge labels.
\reminder{say how we removed symmetric edges, data cleaning steps}

\vspace{-1mm}
{\flushleft \textbf{Methods Compared}}\hspace{2mm}
\system{GQBE} was compared with a \system{Baseline} and \system{NESS}~\cite{ness}.
\system{NESS} is a graph querying framework that finds approximate matches of query
graphs with unlabeled nodes which correspond to query entity nodes in MQG.
Note that, like other systems, \system{NESS} must take a query graph (instead of a query tuple)
as input.  Hence, we feed the MQG discovered by \system{GQBE} as the query
graph to \system{NESS}. For each node $v$ in the query graph, a set of
candidate nodes in the data graph are identified. Since, \system{NESS} does not consider
edge-labeled graphs, we adapted it by requiring each candidate node $v'$ of
$v$ to have at least one incident edge in the data graph bearing the
same label of an edge incident on $v$ in the query graph. The score of
a candidate $v'$ is the similarity between the neighborhoods of $v$ and $v'$,
represented in the form of vectors, and further refined using an iterative
process. Finally, one unlabeled query
node is chosen as the pivot $p$. The top-$k$ candidates for multiple unlabeled
query nodes are put together to form answer tuples, if they are within the neighborhood
of $p$'s top-$k$ candidates. Similar to the best-first method (Sec.\ref{sec:processing}),
\system{Baseline} explores a query lattice in a bottom-up manner
and prunes ancestors of null nodes. However, differently, it evaluates
the lattice by breadth-first traversal instead of in the order of
upper-bound scores. There is no early-termination by \topk\ scores,
as \system{Baseline} terminates when every node is either evaluated or pruned.
We implemented \system{GQBE} and \system{Baseline} in Java and we obtained
the source code of \system{NESS} from the authors.
\reminder{say difference between NESS and GQBE}


\vspace{-1mm}
{\flushleft \textbf{Queries and Ground Truth}}\hspace{2mm}
Two groups of queries are used on the two datasets, respectively.
The Freebase queries F$_1$-- F$_{20}$ are based on Freebase tables
such as \textsf{\scriptsize http:// www.freebase.com/view/computer/programming\_language\_designer?instances},
except F$_1$ and F$_6$ which are from Wikipedia tables such as
\textsf{\scriptsize http://en.wikipedia.org/wiki/List\_of\_English\_football\_club\_owners}.
The DBpedia queries D$_1$-- D$_8$ are based on DBpedia tables such
as the values for property \textsf{\scriptsize is dbpedia-owl:author of}
on page \textsf{\scriptsize http://dbpedia.org/page/Microsoft}.
Each such table is a collection of tuples, in which each tuple consists
of one, two, or three entities.  For each table, we used one or more tuples as
query tuples and the remaining tuples as the ground truth for query answers.
All the $28$ queries and their corresponding table sizes are summarized in Table~\ref{tab:groundTruth}.
They cover diverse domains, including people, companies, movies,
sports, awards, religions, universities and automobiles.

\begin{table} [t]
\centering
\scriptsize
\begin{tabular}{|@{\hspace{0.5mm}}c@{\hspace{0.5mm}}|@{\hspace{0.5mm}}c@{\hspace{0.5mm}}|}
  \hline
  {\bf Query Tuple} & {\bf Top-$3$ Answer Tuples} \\
  \hline \hline
    & $\langle$D. Knuth, Stanford, V. Neumann Medal$\rangle$ \\
    $\langle$Donald Knuth, Stanford, Turing Award$\rangle$ & $\langle$J. McCarthy, Stanford, Turing Award$\rangle$ \\
    & $\langle$N. Wirth, Stanford, Turing Award$\rangle$ \\\hline
    & $\langle$David Filo, Yahoo!$\rangle$ \\
    $\langle$Jerry Yang, Yahoo!$\rangle$ & $\langle$Bill Gates, Microsoft$\rangle$ \\
    & $\langle$Steve Wozniak, Apple Inc.$\rangle$ \\\hline
    & $\langle$Java$\rangle$ \\
    $\langle$C$\rangle$ & $\langle$C++$\rangle$ \\
    & $\langle$C Sharp$\rangle$ \\
  \hline
\end{tabular}
\vspace{1mm}
\caption{Case Study: Top-$3$ Results for Selected Queries}
\label{tab:queryResults}
\vspace{-2mm}
\end{table}

\vspace{-2mm}
{\flushleft \textbf{Sample Answers}}\hspace{2mm} Table \ref{tab:queryResults} only
lists the top-$3$ results found by \system{GQBE} for $3$ queries (F$_1$, F$_{18}$, F$_{19}$),
due to space limitations.

\spara{(A) Accuracy Based on Ground Truth}

We measured the accuracy of \system{GQBE} and \system{NESS} by
comparing their query results with the ground truth.
The accuracy on a set of queries is the average of accuracy on individual queries.
The accuracy on a single query is captured by three widely-used measures~\cite{Manning08}, as follows.
\begin{list}{$\bullet$}
{ \setlength{\leftmargin}{0.5em} }
\item Precision-at-$k$ (P@$k$): the percentage of the top-$k$ results
that belong to the ground truth.

\item Mean Average Precision (MAP): The average precision of the top-$k$
results is AvgP$=$$\frac{\sum_{i=1}^k{\text{P}@i}\ \times\ rel_i}{\text{size of ground truth}}$,
where $rel_i$ equals $1$ if the result at rank $i$ is in the ground truth
and $0$ otherwise.  MAP is the mean of AvgP for a set of queries.

\item Normalized Discounted Cumulative Gain (nDCG):
The cumulative gain of the top-$k$ results is
DCG$_k$$=$$rel_1$$+$$\sum_{i=2}^{k}{\frac{rel_i}{\log_2(i)}}$.
It penalizes the results if a ground truth result is ranked low.
DCG$_k$ is normalized by IDCG$_k$, the cumulative gain for an ideal ranking
of the top-$k$ results.  Thus nDCG$_k$$=$$\frac{\text{DCG}_k}{\text{IDCG}_k}$.
\end{list}

\begin{figure}[t]
\centering
\hspace{-5mm}
\subfigure[P@$k$]{
\includegraphics[scale=0.184, angle=270]{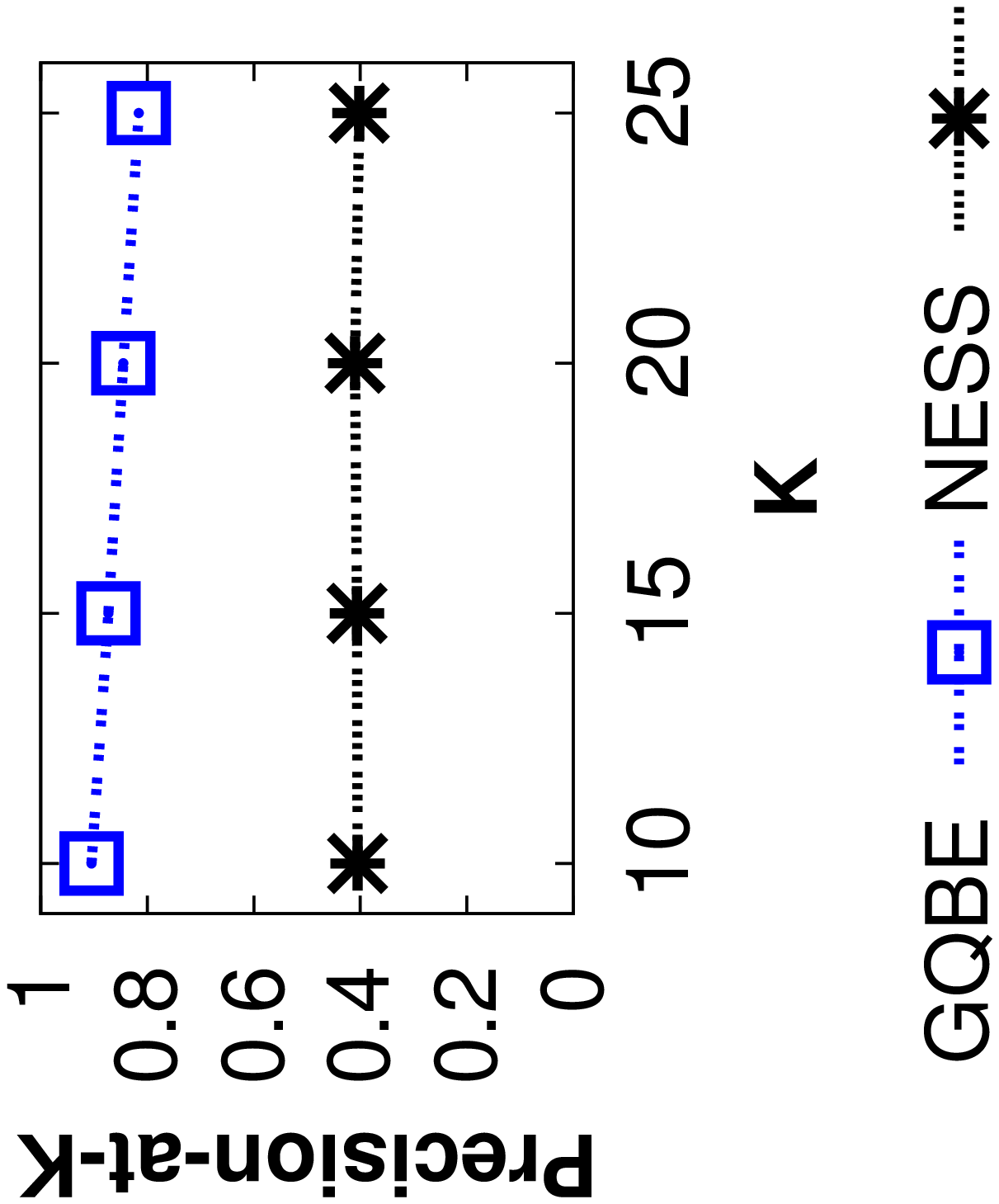}
\label{fig:precision}
}
\hspace{-6mm}
\subfigure[MAP]{
\includegraphics[scale=0.184, angle=270]{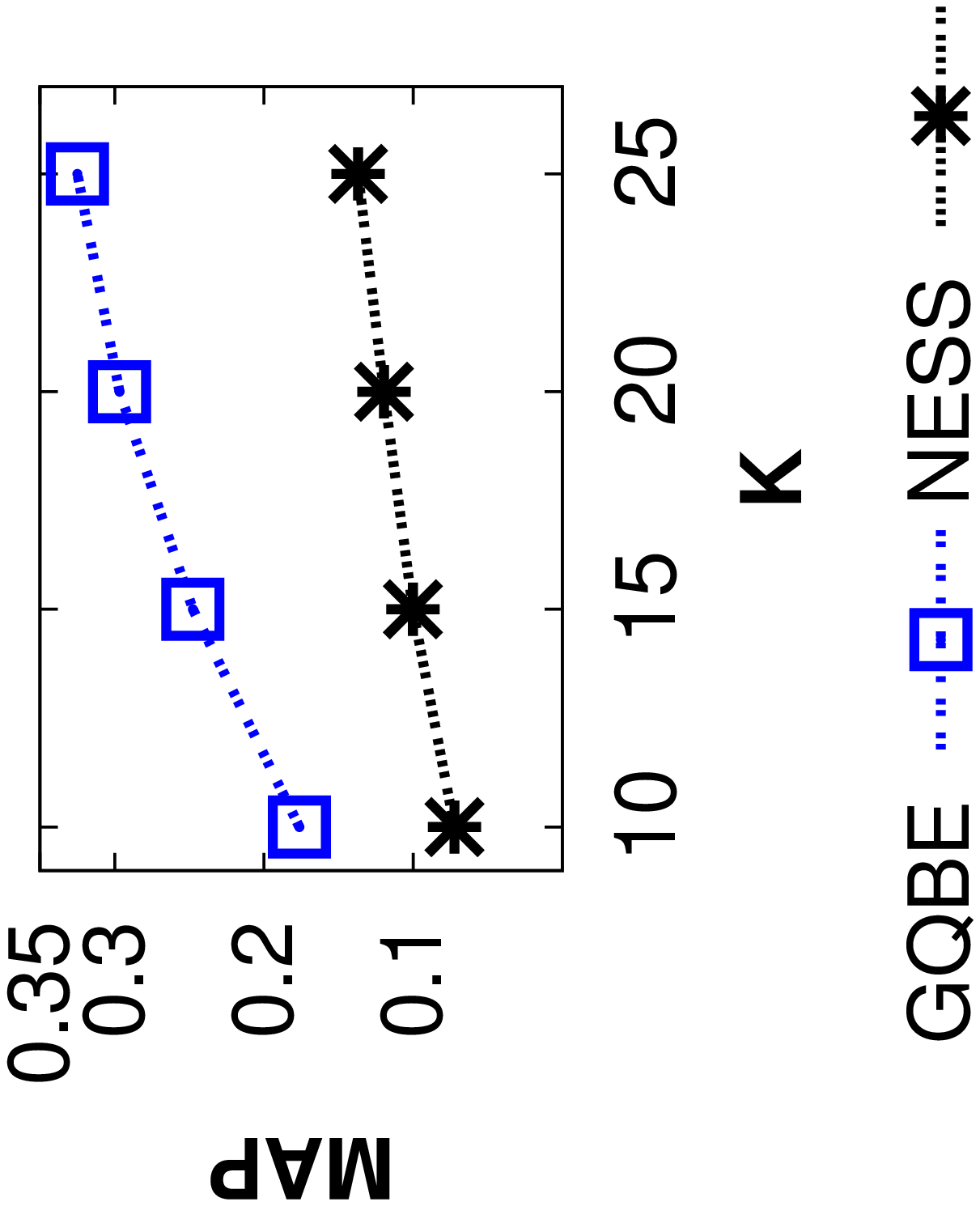}
\label{fig:map}
}
\hspace{-6mm}
\subfigure[nDCG]{
\includegraphics[scale=0.184, angle=270]{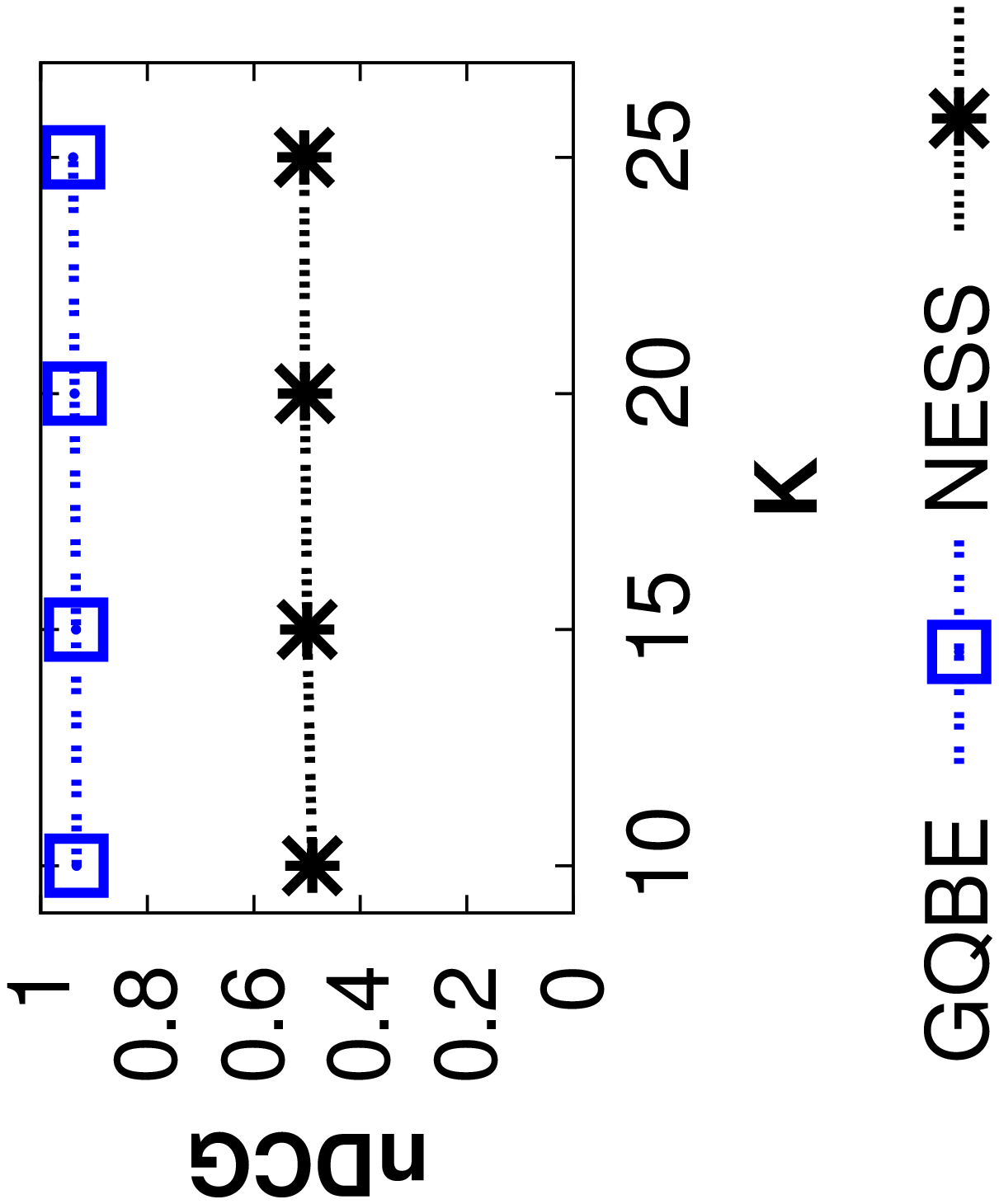}
\label{fig:ndcg}
}
\vspace{-1mm}
\caption{Accuracy of \system{GQBE} and \system{NESS} on Freebase Queries}
\label{fig:accuracy}
\end{figure}

\begin{table} [t]
\centering
\scriptsize
\begin{tabular}{|c|p{4mm}|c|c||c|p{4mm}|c|c|}
  \hline
  {\bf Query} & {\bf P@$k$} & {\bf nDCG} & {\bf AvgP} & {\bf Query} & {\bf P@$k$} & {\bf nDCG} & {\bf AvgP}\\
  \hline \hline
  D$_1$ & 1.00 & 1.00 & 0.20 & D$_2$ & 1.00 & 1.00 & 0.04 \\
  D$_3$ & 1.00 & 1.00 & 0.03 & D$_4$ & 0.80 & 0.94 & 0.19 \\
  D$_5$ & 0.90 & 1.00 & 0.08 & D$_6$ & 1.00 & 1.00 & 0.04 \\
  D$_7$ & 0.90 & 0.98 & 0.22 & D$_8$ & 1.00 & 1.00 & 0.01 \\
  \hline
 \end{tabular}
\vspace{1mm}
\caption{Accuracy of \system{GQBE} on DBpedia Queries, $k$=$10$}
\label{tab:dbpediaResults}
\vspace{-2mm}
\end{table}

Fig.\ref{fig:accuracy} shows these measures for different values
of $k$ on the Freebase queries.  \system{GQBE} has high accuracy.
For instance, its P@$25$ is over $0.8$.  The absolute value of MAP is not high, merely because
Fig.\ref{fig:map} only shows the MAP for at most top-$25$
results, while the ground truth size (i.e., the denominator in
calculating MAP) for many queries is much larger.
Moreover, \system{GQBE} outperforms \system{NESS} substantially, as its
accuracy in all three measures is almost always twice as better.
This is because \system{GQBE} gives priority to query entities and
important edges in MQG, while \system{NESS} gives equal
importance to all nodes and edges except the pivot.
Furthermore, the way \system{NESS} handles edge labels
does not explicitly require answer entities to be connected by
the same paths between query entities.

Table~\ref{tab:dbpediaResults} further shows the accuracy of \system{GQBE}
on individual DBpedia queries at $k$=$10$.  It exhibits high accuracy on all queries,
including perfect precision in several cases.

\begin{table} [t]
\centering
\scriptsize
\begin{tabular}{|@{\hspace{1.5mm}}c@{\hspace{1.5mm}}|@{\hspace{1.5mm}}c@{\hspace{1.5mm}}||@{\hspace{1.5mm}}c@{\hspace{1.5mm}}|@{\hspace{1.5mm}}c@{\hspace{1.5mm}}||@{\hspace{1.5mm}}c@{\hspace{1.5mm}}|@{\hspace{1.5mm}}c@{\hspace{1.5mm}}||@{\hspace{1.5mm}}c@{\hspace{1.5mm}}|@{\hspace{1.5mm}}c@{\hspace{1.5mm}}|}
  \hline
  {\bf Query}  &   {\bf PCC}  & {\bf Query} & {\bf PCC} & {\bf Query} & {\bf PCC} & {\bf Query} & {\bf PCC} \\
  \hline
  F$_1$ & 0.79 & F$_2$ & 0.78 & F$_3$ & 0.60 & F$_4$ & 0.80 \\
  F$_5$ & 0.34 & F$_6$ & 0.27 & F$_7$ & 0.06 & F$_8$ & 0.26 \\
  F$_9$ & 0.33 & F$_{10}$ & 0.77 & F$_{11}$ & 0.58 & F$_{12}$ & undefined \\
  F$_{13}$ & undefined & F$_{14}$ & 0.62 & F$_{15}$ & 0.43 & F$_{16}$ & 0.29 \\
  F$_{17}$ & 0.64 & F$_{18}$ & 0.30 & F$_{19}$ & 0.40 & F$_{20}$ & 0.65 \\
  \hline
\end{tabular}
\vspace{1mm}
\caption{Pearson Correlation Coefficient (PCC) between \system{GQBE} and Amazon MTurk Workers, $k$=$30$}
\label{tab:pcc}
\vspace{-2mm}
\end{table}

\spara{(B) Accuracy Based on User Study}

We conducted an extensive user study through Amazon Mechanical Turk
(MTurk, \textsf{\scriptsize https://www.mturk.com/mturk/}) to evaluate \system{GQBE}'s
accuracy on Freebase queries, measured by Pearson Correlation Coefficient
(PCC).  For each of the $20$ queries, we obtained the top-$30$
answers from \system{GQBE} and generated $50$ random pairs of these answers.
We presented each pair to $20$ MTurk workers and asked for their preference
between the two answers in the pair.  Hence, in total, $20,000$ opinions were obtained.
We then constructed two value lists per query, $X$ and $Y$, which represent
\system{GQBE} and MTurk workers' opinions, respectively.  Each list has $50$
values, for the $50$ pairs.  For each pair, the value in $X$ is the difference between
the two answers' ranks given by \system{GQBE}, and the value in $Y$ is the difference
between the numbers of workers favoring the two answers.  The PCC value for a query
is $(\text{E}(XY)-\text{E}(X)\text{E}(Y))/(\sqrt{\text{E}(X^2)-(\text{E}(X))^2}\sqrt{\text{E}(Y^2)-(\text{E}(Y))^2})$.
The value indicates the degree of correlation between the pairwise ranking orders
produced by \system{GQBE} and the pairwise preferences given by MTurk workers.
The value range is from $-1$ to $1$.  A PCC value in the ranges of [$0.5$,$1.0$],
[$0.3$,$0.5$) and [$0.1$,$0.3$) indicates a strong, medium and small positive correlation,
respectively~\cite{pcc_cohen}.
PCC is undefined, by definition, when $X$ and/or $Y$ contain all equal values.

Table \ref{tab:pcc} shows the PCC values for F$_{1}$-F$_{20}$.  Out of the $20$ queries,
\system{GQBE} attained strong, medium and small positive correlation
with MTurk workers on $9$, $5$ and $3$ queries, respectively.  Only query F$_{7}$ shows no correlation.
Note that PCC is undefined for F$_{12}$ and F$_{13}$, because all the top-$30$ answer tuples
have the same score and thus the same rank, resulting in all zero values in $X$, i.e., \system{GQBE}'s list.

\begin{table*} [t]
\centering
\scriptsize
\begin{tabular}{|p{5mm}|p{4mm}|p{5mm}|p{5mm}|p{4mm}|p{5mm}|p{5mm}|p{4mm}|p{5mm}|p{5mm}|p{4mm}|p{5mm}|p{5mm}|p{4mm}|p{5mm}|p{5mm}|}
  \hline
  {\bf Query} & \multicolumn{3}{|c|}{{\bf Tuple1}} & \multicolumn{3}{|c|}{{\bf Tuple2}} & \multicolumn{3}{|c|}{{\bf Combined (1,2)}}  & \multicolumn{3}{|c|}{{\bf Tuple3}} & \multicolumn{3}{|c|}{{\bf Combined (1,2,3)}}\\\hline
  \hline
   & P@$k$ & nDCG & AvgP & P@$k$ & nDCG & AvgP & P@$k$ & nDCG & AvgP & P@$k$ & nDCG & AvgP & P@$k$ & nDCG & AvgP\\\hline
  F$_1$ & {\bf 0.36} & 0.76 & 0.32 & {\bf 0.36} & {\bf 1.00} & {\bf 0.50} & 0.12 & 0.38 & 0.02 & {\bf 0.36} & 0.73 & 0.22 & 0.12 & 0.49 & 0.02 \\
  F$_2$ & 0.76 & {\bf 1.00} & 0.79 & 0.00 & 0.00 & 0.00 & {\bf 0.80} & {\bf 1.00} & 0.80 & 0.12 & 0.70 & 0.05 & {\bf 0.80} & {\bf 1.00} & {\bf 0.91} \\
  F$_4$ & 0.32 & 0.73 & 0.09 & 0.40 & 0.65 & 0.08 & {\bf 1.00} & {\bf 1.00} & {\bf 0.45} & N/A & N/A & N/A & N/A & N/A & N/A \\
  F$_6$ & 0.24 & 0.89 & 0.16 & 0.28 & 0.89 & 0.18 & {\bf 0.40} & 0.87 & 0.16 & 0.36 & 0.98 & {\bf 0.22} & 0.12 & {\bf 0.94} & 0.07 \\
  F$_8$ & 0.92 & 0.79 & 0.20 & {\bf 1.00} & {\bf 1.00} & {\bf 0.27} & 0.96 & 0.98 & 0.24 & 0.48 & 0.86 & 0.08 & {\bf 1.00} & {\bf 1.00} & {\bf 0.27} \\
  F$_9$ & 0.68 & 0.72 & 0.23 & 0.56 & 0.66 & 0.17 & 0.80 & 0.86 & 0.35 & {\bf 1.00} & {\bf 1.00} & 0.62 & {\bf 1.00} & {\bf 1.00} & {\bf 0.66} \\
  F$_{17}$ & 0.32 & {\bf 1.00} & 0.33 & 0.64 & 0.83 & 0.25 & 0.32 & {\bf 1.00} & 0.32 & 0.56 & 0.84 & 0.23 & {\bf 0.68} & {\bf 1.00} & {\bf 0.46} \\
  \hline
\end{tabular}
\vspace{1mm}
\caption{Accuracy of \system{GQBE} on Multi-tuple Queries, $k$=$25$}
\label{tab:multiTup}
\vspace{-3mm}
\end{table*}

\begin{figure*}[htb]
\begin{minipage}[b]{0.5\linewidth}
\centering
  \includegraphics[width = 0.90\linewidth, keepaspectratio = true, scale=0.6]{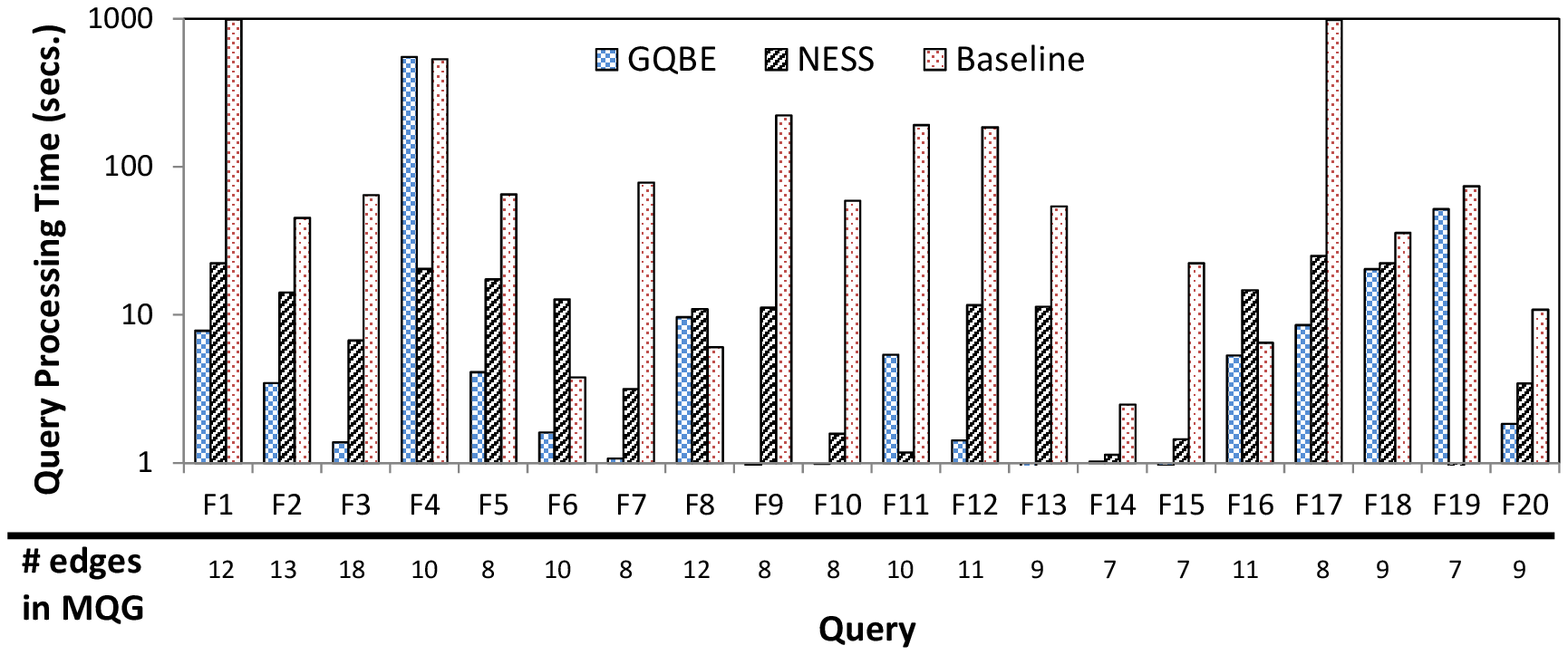}
\caption{Query Processing Time}
\label{fig:run-time}
\end{minipage}\vspace{-2mm}
\begin{minipage}[b]{0.5\linewidth}
\centering
  \includegraphics[width = 0.90\linewidth, keepaspectratio = true, scale=0.6]{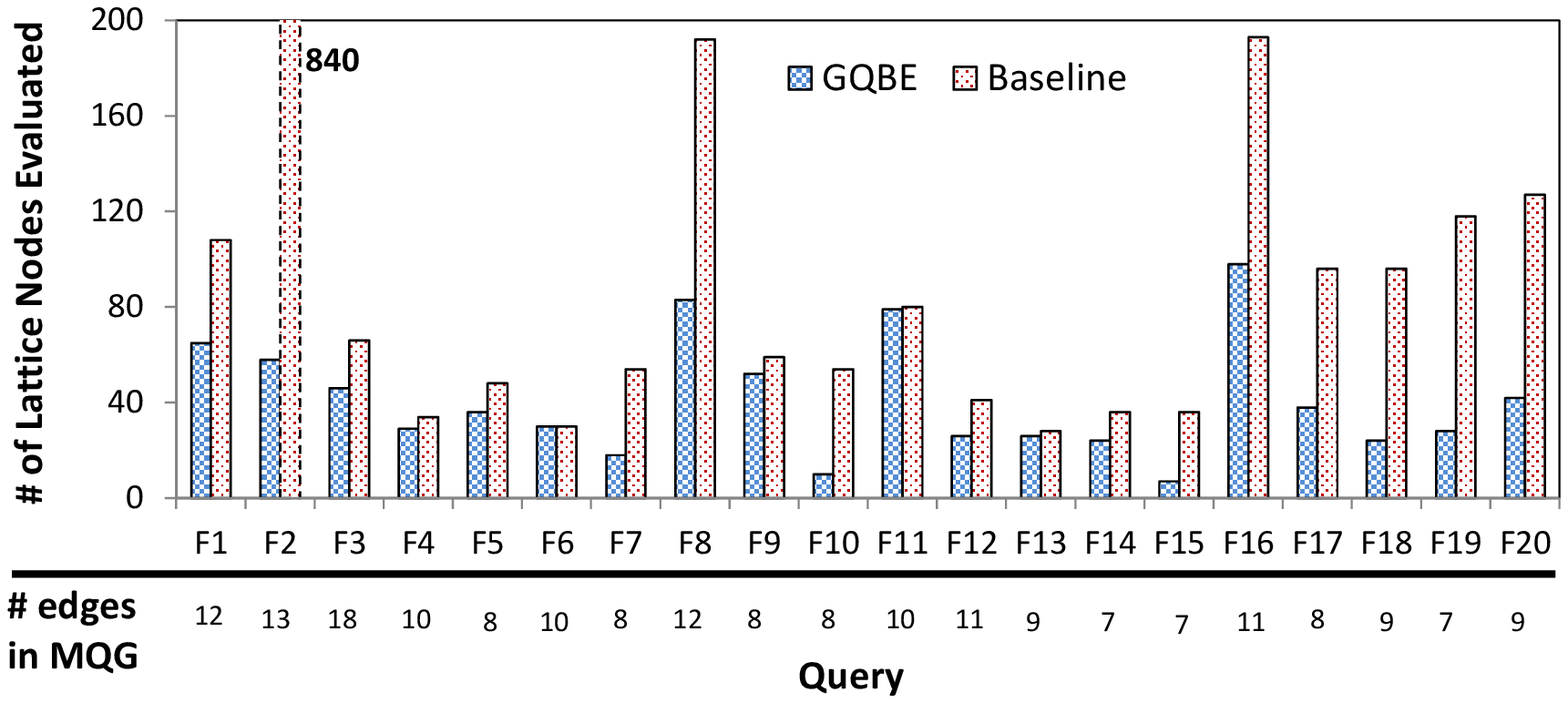}
\caption{Lattice Nodes Evaluated}
\label{fig:nodes-traversed}
\end{minipage}\vspace{-2mm}
\end{figure*}

\spara{(C) Accuracy on Multi-tuple Queries}

We investigated the effectiveness of the multi-tuple querying approach
(Sec.\ref{sec:multituple}).  In aforementioned single-tuple query
experiment (A), \system{GQBE} attained perfect P@$25$ for $13$ of the
$20$ Freebase queries.  We thus focused on the remaining $7$ queries.
For each query, Tuple1 refers to the query tuple in
Table~\ref{tab:groundTruth}, while Tuple2 and Tuple3 are two tuples
from its ground truth.  Table~\ref{tab:multiTup} shows
the accuracy of top-$25$ \system{GQBE} answers for the three
tuples individually, as well as for the first two and three tuples
together by merged MQGs, which are denoted Combined(1,2) and
Combined(1,2,3), respectively.  F$_4$ attained perfect precision after
Tuple2 was included.  Therefore, Tuple3 was not used for F$_4$.  The
results show that, in most cases, Combined(1,2) had better accuracy
than individual tuples and Combined(1,2,3) further improved accuracy.

\spara{(D) Efficiency Results}

We compared the efficiency of \system{GQBE}, \system{NESS} and
\system{Baseline} on Freebase queries.  The total run time for a query
tuple is spent on two components---query graph discovery and query
processing.  Fig.\ref{fig:run-time} compares the three
methods' query processing time, in logarithmic scale.
The figure shows the query processing time for each of the $20$ Freebase queries, and the edge cardinality of the MQG for each of those is shown below the corresponding query id.
The query cost does not appear to increase by edge
cardinality, regardless of the query method.
For \system{GQBE} and \system{Baseline}, this is because
query graphs are evaluated by joins and join selectivity plays a more
significant role in evaluation cost than number of edges.
\system{NESS} finds answers by intersecting postings lists on feature
vectors.  Hence, in evaluation cost, intersection size matters more
than edge cardinality.
\system{GQBE} outperformed \system{NESS} on $17$ of the $20$ queries
and was more than $3$ times faster in $10$ of them. It finished within $10$
seconds on $17$ queries. However, it performed very poorly 
on F$4$ and F$19$, which have $10$ and $7$ edges respectively.
This indicates that the edges in the two MQGs lead to poor join selectivity.
\system{Baseline} clearly suffered, due to its inferior pruning
power compared to the best-first exploration employed by \system{GQBE}.
This is evident in Fig.\ref{fig:nodes-traversed} which shows the numbers
of lattice nodes evaluated for each query. 
\system{GQBE} evaluated considerably less nodes in most cases and
at least $2$ times less on $11$ of the $20$ queries.

MQG discovery precedes the query processing step and is shared by all
three methods.  Column MQG$_1$ in Table~\ref{tab:multiTuple-mqg} lists
the time spent on discovering MQG for each Freebase query.  This time
component varies across individual queries, depending on the sizes of
query tuples' neighborhood graphs.  Compared to the values shown in
Fig.\ref{fig:run-time}, the time taken to discover an MQG in average
is comparable to the time spent by \system{GQBE} in evaluating it.


\begin{table} [t]
\centering
\scriptsize
\begin{tabular}{|@{\hspace{1.5mm}}c@{\hspace{1.5mm}}|@{\hspace{1.5mm}}c@{\hspace{1.5mm}}|@{\hspace{1.5mm}}c@{\hspace{1.5mm}}|@{\hspace{1.5mm}}c@{\hspace{1.5mm}}||@{\hspace{1.5mm}}c@{\hspace{1.5mm}}|@{\hspace{1.5mm}}c@{\hspace{1.5mm}}|@{\hspace{1.5mm}}c@{\hspace{1.5mm}}|@{\hspace{1.5mm}}c@{\hspace{1.5mm}}|}
  \hline
  {\bf Query}  &   {\bf MQG$_1$}  & {\bf MQG$_2$} & {\bf Merge} & {\bf Query} & {\bf MQG$_1$}  & {\bf MQG$_2$} & {\bf Merge} \\
  \hline
  F$_1$ & 73.141 & 73.676 & 0.034 & F$_2$ & 0.049 & 0.029 & 0.006 \\
  F$_3$ & 12.566 & 4.414 & 0.024 & F$_4$ & 5.731 & 7.083 & 0.024 \\
  F$_5$ & 9.982 & 2.522 & 0.079 & F$_6$ & 6.082 & 4.654 & 0.039 \\
  F$_7$ & 0.152 & 0.107 & 0.007 & F$_8$ & 10.272 & 2.689 & 0.032 \\
  F$_9$ & 62.285 & 2.384 & 0.041 & F$_{10}$ & 2.910 & 5.933 & 0.030 \\
  F$_{11}$ & 59.541 & 65.863 & 0.032 & F$_{12}$ & 1.977 & 0.021 & 0.006 \\
  F$_{13}$ & 9.481 & 5.624 & 0.034 & F$_{14}$ & 0.038 & 0.015 & 0.004 \\
  F$_{15}$ & 0.154 & 5.143 & 0.021 & F$_{16}$ & 54.870 & 6.928 & 0.057 \\
  F$_{17}$ & 60.582 & 69.961 & 0.041 & F$_{18}$ & 58.807 & 75.128 & 0.053 \\
  F$_{19}$ & 0.224 & 0.076 & 0.003 & F$_{20}$ & 0.025 & 0.017 & 0.002 \\
  \hline
\end{tabular}
\vspace{1mm}
\caption{Time for Discovering and Merging MQGs (secs.)}
\label{tab:multiTuple-mqg}
\vspace{-2mm}
\end{table}

\begin{figure}[t]
\centering
  \includegraphics[width = 0.5\linewidth, keepaspectratio = true, scale=0.4]{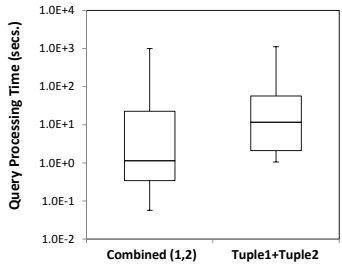}\vspace{-3mm}
\caption{Query Processing Time of $2$-tuple Queries}
\label{fig:multiTuple-exec-time}
\end{figure}


Fig.\ref{fig:multiTuple-exec-time} shows the distribution of the \system{GQBE}'s query
processing time, in logarithmic scale, on the merged
MQGs of 2-tuple queries in Table~\ref{tab:multiTup}, denoted by
Combined(1,2).  It also shows the distribution of the total time for evaluating the two
tuples' MQGs individually, denoted Tuple1+Tuple2. 
Combined(1,2) processes $10$ of the $20$ queries in less than a second while the fastest query for Tuple1+Tuple2 takes a second.
This suggests that the merged MQGs gave higher weights to more
selective edges, resulting in faster lattice evaluation.
Meanwhile, these selective edges are also more important edges
common to the two query tuples, leading to improved answer accuracy
as shown in Table~\ref{tab:multiTup}.  Table~\ref{tab:multiTuple-mqg}
further shows the time taken to discover MQG$_1$ and MQG$_2$
for the two tuples, along with the time for merging them.
The latter is negligible compared to the former.

\section{Related Work}\label{sec:related}

Lim et al.~\cite{lim_edbt13} use example tuples to find similar tuples in database tables that are coupled with ontologies.
They do not deal with graph data and the example tuples are not formed by entities.

The goal of \emph{set expansion} is to grow a set of objects starting from seed objects.  Example systems include \cite{seal}, \cite{Gupta-2009}, and the now defunct \system{Google\ Sets} and \system{Squared} services
(\textsf{\scriptsize http://en.wikipedia.org/wiki/List\_of\_Google\_products}).
Chang et al.~\cite{infoNebula} identify \topk\ correlated keyword terms from an information network given a set of terms, where each term can be an entity.  These systems, except~\cite{infoNebula}, do not operate on data graphs.  Instead, they find existing answers within structures in web pages such as HTML tables and lists.  Furthermore, all these systems except \system{Google\ Squared} and \cite{Gupta-2009} take a set of individual entities as input.  \system{GQBE} is more general in that each query tuple contains multiple entities.

Several works~\cite{TF06,Kasneci+09ming,Fang+11rex} 
identify the best subgraphs/paths in a data graph to describe how several input nodes are related.   The query graph discovery component of \system{GQBE} is different in important ways-- (1) The graphs in~\cite{TF06} 
contain nodes of the same type and edges representing the same relationship, e.g., social networks capturing friendship between people. 
The graphs in \system{GQBE} and others~\cite{Kasneci+09ming,Fang+11rex} have many different types of entities and relationships. (2) The paths discovered by their techniques only connect the input nodes.  \system{REX}~\cite{Fang+11rex} has the further limitation of allowing only two input entities.  Differently the maximal query graph in \system{GQBE} includes edges incident on individual query entities.  (3) \system{GQBE} uses the discovered query graph to find answer graphs and answer tuples, which is not within the focus of the aforementioned works.


There are many studies on approximate/inexact subgraph matching in
large graphs, such as \system{G}-\system{Ray}~\cite{TFGE07}, \system{TALE}~\cite{tale} and
\system{NESS}~\cite{ness}. 
\system{GQBE}'s query processing component is different from them on several aspects.
First, \system{GQBE} only requires to match edge labels and matching node identifiers is
not mandatory.  This is equivalent to matching a query graph
with all unlabeled nodes and thereby significantly increases the problem
complexity.  Only a few previous methods (e.g., \system{NESS}~\cite{ness})
allow unlabeled query nodes.  Second, in \system{GQBE}, the \topk\ query algorithm centers
around query entities.
More specifically, the weighting function gives more importance
to query entities and the minimal query trees mandate the presence of entities
corresponding to query entities.
On the contrary, previous methods give equal importance to all nodes in a query graph,
since the notion of query entity does not exist there.
Our empirical results show that this difference makes \system{NESS} produce less
accurate answers than \system{GQBE}.  Finally, although the query relaxation DAG proposed
in~\cite{sihem-Yahia05} is similar to \system{GQBE}'s query lattice, the scoring mechanism
of their relaxed queries is different and depends on XML-based relaxations.

\section{Conclusion}\label{sec:conclude}

We introduce \system{GQBE}, a system that queries
knowledge graphs by example entity tuples.
As an initial step toward better usability of graph query systems,
\system{GQBE} saves users the burden of forming explicit query graphs.
To the best of our knowledge, there has been no such proposal in the past.
Its query graph discovery component derives a hidden query graph based on example tuples.
The query lattice based on this hidden graph may contain a large number of query graphs.
\system{GQBE}'s query algorithm only partially evaluates query graphs for
obtaining the \topk\ answers.
Experiments on Freebase and DBpedia datasets show that \system{GQBE}
outperforms the state-of-the-art system \system{NESS} on both accuracy and efficiency.

\vspace{-1mm}
\section{Acknowledgements}\label{sec:acknowledgement}
The authors would like to thank Mahesh Gupta for his valuable contributions in performing the experiments.

\small
\bibliographystyle{abbrv}
\bibliography{gqbe}	

\begin{thebibliography}{10}

\bibitem{abadi07}
D.~J. Abadi, A.~Marcus, S.~Madden, and K.~J. Hollenbach.
\newblock Scalable semantic web data management using vertical partitioning.
\newblock In {\em VLDB'07}.

\bibitem{sihem-Yahia05}
S.~Amer-Yahia, N.~Koudas, A.~Marian, D.~Srivastava, and D.~Toman.
\newblock Structure and content scoring for xml.
\newblock In {\em VLDB}, 2005.

\bibitem{AuerBK+07}
S.~Auer, C.~Bizer, G.~Kobilarov, J.~Lehmann, R.~Cyganiak, and Z.~Ives.
\newblock {DB}pedia: A nucleus for a {Web} of open data.
\newblock In {\em ISWC}, 2007.

\bibitem{Bollacker+08freebase}
K.~Bollacker, C.~Evans, P.~Paritosh, T.~Sturge, and J.~Taylor.
\newblock Freebase: a collaboratively created graph database for structuring
  human knowledge.
\newblock In {\em SIGMOD}, pages 1247--1250, 2008.

\bibitem{infoNebula}
L.~Chang, J.~X. Yu, L.~Qin, Y.~Zhu, and H.~Wang.
\newblock Finding information nebula over large networks.
\newblock In {\em CIKM}, 2011.

\bibitem{GRAPHITE}
D.~H. Chau, C.~Faloutsos, H.~Tong, J.~I. Hong, B.~Gallagher, and
  T.~Eliassi-Rad.
\newblock {GRAPHITE}: A visual query system for large graphs.
\newblock In {\em ICDM Workshops}, pages 963--966, 2008.

\bibitem{pcc_cohen}
J.~Cohen.
\newblock {\em {Statistical Power Analysis for the Behavioral Sciences}}.
\newblock Lawrence Erlbaum Associates, 1988.

\bibitem{Demidova+12}
E.~Demidova, X.~Zhou, and W.~Nejdl.
\newblock {FreeQ}: an interactive query interface for {Freebase}.
\newblock In {\em WWW}, demo paper, 2012.

\bibitem{Fang+11rex}
L.~Fang, A.~D. Sarma, C.~Yu, and P.~Bohannon.
\newblock {REX}: explaining relationships between entity pairs.
\newblock In {\em PVLDB}, pages 241--252, 2011.

\bibitem{GabowM78}
H.~N. Gabow and E.~W. Myers.
\newblock Finding all spanning trees of directed and undirected graphs.
\newblock {\em SIAM J. Comput.}, 7(3):280--287, 1978.

\bibitem{Gupta-2009}
R.~Gupta and S.~Sarawagi.
\newblock Answering table augmentation queries from unstructured lists on the
  web.
\newblock In {\em VLDB}, pages 289--300, 2009.

\bibitem{usability}
H.~V. Jagadish, A.~Chapman, A.~Elkiss, M.~Jayapandian, Y.~Li, A.~Nandi, and
  C.~Yu.
\newblock Making database systems usable.
\newblock In {\em SIGMOD}, 2007.

\bibitem{Jarrar+12}
M.~Jarrar and M.~D. Dikaiakos.
\newblock A query formulation language for the data web.
\newblock {\em TKDE}, 24:783--798, 2012.

\bibitem{gqbedemo}
N.~Jayaram, M.~Gupta, A.~Khan, C.~Li, X.~Yan, and R.~Elmasri.
\newblock {GQBE}: Querying knowledge graphs by example entity tuples.
\newblock In {\em ICDE (demo description)}, 2014 (to appear).

\bibitem{GBLENDER}
C.~Jin, S.~S. Bhowmick, X.~Xiao, J.~Cheng, and B.~Choi.
\newblock {GBLENDER}: Towards blending visual query formulation and query
  processing in graph databases.
\newblock In {\em SIGMOD}, pages 111--122, 2010.

\bibitem{KA11}
M.~Kargar and A.~An.
\newblock Keyword search in graphs: Finding r-cliques.
\newblock {\em PVLDB}, pages 681--692, 2011.

\bibitem{Kasneci+09ming}
G.~Kasneci, S.~Elbassuoni, and G.~Weikum.
\newblock {MING}: mining informative entity relationship subgraphs.
\newblock In {\em CIKM}, 2009.

\bibitem{ness}
A.~Khan, N.~Li, X.~Yan, Z.~Guan, S.~Chakraborty, and S.~Tao.
\newblock Neighborhood based fast graph search in large networks.
\newblock In {\em SIGMOD'11}.

\bibitem{KhanWY12}
A.~Khan, Y.~Wu, and X.~Yan.
\newblock Emerging graph queries in linked data.
\newblock In {\em ICDE}, pages 1218--1221, 2012.

\bibitem{LZZC09}
Z.~Li, S.~Zhang, X.~Zhang, and L.~Chen.
\newblock Exploring the constrained maximum edge-weight connected graph
  problem.
\newblock {\em Acta Mathematicae Applicatae Sinica}, 25:697--708, 2009.

\bibitem{lim_edbt13}
L.~Lim, H.~Wang, and M.~Wang.
\newblock Semantic queries by example.
\newblock In {\em Proceedings of the 16th International Conference on Extending
  Database Technology (EDBT 2013)}, 2013.

\bibitem{spark}
Y.~Luo, X.~Lin, W.~Wang, and X.~Zhou.
\newblock Spark: top-k keyword query in relational databases.
\newblock In {\em SIGMOD}, 2007.

\bibitem{Manning08}
C.~D. Manning, P.~Raghavan, and H.~Schtze.
\newblock {\em Introduction to Information Retrieval}.
\newblock Cambridge University Press, NY, USA, 2008.

\bibitem{PoundIW10}
J.~Pound, I.~F. Ilyas, and G.~E. Weddell.
\newblock Expressive and flexible access to web-extracted data: a keyword-based
  structured query language.
\newblock In {\em SIGMOD}, pages 423--434, 2010.

\bibitem{SuchanekKW07}
F.~M. Suchanek, G.~Kasneci, and G.~Weikum.
\newblock {YAGO}: a core of semantic knowledge unifying {WordNet} and
  {Wikipedia}.
\newblock In {\em WWW}, 2007.

\bibitem{Sun+11}
Y.~Sun, J.~Han, X.~Yan, P.~S. Yu, , and T.~Wu.
\newblock {PathSim}: Meta path-based top-k similarity search in heterogeneous
  information networks.
\newblock {\em VLDB}, 2011.

\bibitem{tale}
Y.~Tian and J.~M. Patel.
\newblock {TALE}: A tool for approximate large graph matching.
\newblock In {\em ICDE}, pages 963--972, 2008.

\bibitem{TF06}
H.~Tong and C.~Faloutsos.
\newblock Center-piece subgraphs: Problem definition and fast solutions.
\newblock In {\em KDD}, pages 404--413, 2006.

\bibitem{TFGE07}
H.~Tong, C.~Faloutsos, B.~Gallagher, and T.~Eliassi-Rad.
\newblock Fast best-effort pattern matching in large attributed graphs.
\newblock {\em KDD}, 2007.

\bibitem{seal}
R.~C. Wang and W.~W. Cohen.
\newblock Language-independent set expansion of named entities using the web.
\newblock In {\em ICDM}, pages 342--350, 2007.

\bibitem{probase}
W.~Wu, H.~Li, H.~Wang, and K.~Q. Zhu.
\newblock Probase: a probabilistic taxonomy for text understanding.
\newblock In {\em SIGMOD}, pages 481--492, 2012.

\bibitem{Yahya+12}
M.~Yahya, K.~Berberich, S.~Elbassuoni, M.~Ramanath, V.~Tresp, and G.~Weikum.
\newblock Deep answers for naturally asked questions on the web of data.
\newblock In {\em WWW}, demo paper, pages 445--449, 2012.

\bibitem{YCHH12}
J.~Yao, B.~Cui, L.~Hua, and Y.~Huang.
\newblock Keyword query reformulation on structured data.
\newblock {\em ICDE}, pages 953--964, 2012.

\bibitem{YSZH12}
X.~Yu, Y.~Sun, P.~Zhao, and J.~Han.
\newblock Query-driven discovery of semantically similar substructures in
  heterogeneous networks.
\newblock In {\em KDD'12}.

\bibitem{qbe}
M.~M. Zloof.
\newblock Query by example.
\newblock In {\em AFIPS}, 1975.

\end{thebibliography}

\end{document}